\newcommand{\cmark}{\ding{51}}%
\newcommand{\xmark}{\ding{55}}%
\newcommand\ALG[3]{\begin{algorithm}[t]\caption{#2}\label{alg:#1}\begin{algorithmic}[1]#3\end{algorithmic}\end{algorithm}}
\newcommand\ALGref[1]{Algorithm~\ref{alg:#1}}
\newcommand\Pa[1]{{(#1)}}
\newcommand\VEC{\operatorname{vec}}
\newcommand\Mat[1]{\mathop{\operatorname{mat}_{#1}}}
\newcommand\TODO[1][]{{\color{orange}[TODO\ifthenelse{\equal{#1}{}}{}{: #1}]}}
\newcommand\SEC\section
\newcommand\SSEC\subsection
\newcommand\SSSEC\subsubsection
\newcommand\Par[1]{\vspace{0.5em}\noindent\textbf{#1.}}
\newcommand\BM\boldsymbol
\newcommand\BB\mathbb
\newcommand\CAL\mathcal
\newcommand\SCR\mathscr
\newcommand\FRAK\mathfrak
\newcommand\BS[1]{\boldsymbol{\mathscr{#1}}}
\newcommand\TP{\mathsf{T}}
\newcommand\Tp{\TP}
\newcommand\MAT[1]{\left[\begin{smallmatrix}#1\end{smallmatrix}\right]}
\newcommand\AL[1]{\begin{align}#1\end{align}}
\newcommand\HAT\widehat
\newcommand\BAR\overline
\newcommand\TLD\widetilde
\newcommand\Ours{TUCKET}
\newcommand\OursFull{\underline{Tucke}r \underline{T}ree}
\newcommand\FullPaperUrl{\url{https://github.com/q-rz/TUCKET/blob/main/TUCKET-Full.pdf}}
\newcommand\ZoomTucker{Zoom-Tucker}
\newcommand\Ts{T_\textnormal{s}}
\newcommand\Te{T_\textnormal{e}}
\newcommand{\change}[1]{{\textcolor{blue}{#1}}}
\newcommand{\ie}{{i.e.}}
\renewcommand\change[1]{#1}
\newcommand\vldbdoi{10.14778/3704965.3704980}
\newcommand\vldbpages{4746 - 4759}
\newcommand\vldbvolume{17}
\newcommand\vldbissue{13}
\newcommand\vldbyear{2024}
\newcommand\vldbauthors{\authors}
\newcommand\vldbtitle{\shorttitle} 
\newcommand\vldbavailabilityurl{https://github.com/q-rz/TUCKET}
\newcommand\vldbpagestyle{empty} 
\begin{document}

\title{\Ours: A Tensor Time Series Data Structure for Efficient and Accurate Factor Analysis over Time Ranges}

\author{Ruizhong Qiu}
\authornote{Equal contribution.}
\affiliation{%
  \institution{University of Illinois}
  \city{Urbana--Champaign}
}
\email{rq5@illinois.edu}

\author{Jun-Gi Jang}
\authornotemark[1]
\affiliation{%
  \institution{University of Illinois}
  \city{Urbana--Champaign}
}
\email{jungi@illinois.edu}

\author{Xiao Lin}
\affiliation{%
  \institution{University of Illinois}
  \city{Urbana--Champaign}
}
\email{xiaol13@illinois.edu}

\author{Lihui Liu}
\affiliation{%
  \institution{University of Illinois}
  \city{Urbana--Champaign}
}
\email{lihuil2@illinois.edu}

\author{Hanghang Tong}
\affiliation{%
  \institution{University of Illinois}
  \city{Urbana--Champaign}
}
\email{htong@illinois.edu}

\renewcommand{\shortauthors}{Qiu et al.}
\setlength{\floatsep}{0.01cm}
\setlength{\textfloatsep}{0.01cm}
\setlength{\intextsep}{0.01cm}
\setlength{\dblfloatsep}{0.01cm}
\setlength{\dbltextfloatsep}{0.01cm}
\setlength{\abovedisplayskip}{0.01cm}
\setlength{\belowdisplayskip}{0.01cm}
\setlength{\abovecaptionskip}{0.01cm}
\setlength{\belowcaptionskip}{0.01cm}

\begin{abstract}

Given an evolving tensor time series and multiple time ranges, how can we compute Tucker decomposition for each time range efficiently and accurately? 
Tucker decomposition has been widely used in a variety of applications to obtain latent factors of tensor data. 
For example, Tucker decomposition on air pollution data allows us to analyze and compare air pollution patterns between different locations during different periods of time.
In these applications, a common need is to compute Tucker decomposition for a given time range. Furthermore, real-world tensor time series are typically evolving in the time dimension. 
Such needs call for a data structure that can efficiently and accurately support range queries of Tucker decomposition and stream updates.
Unfortunately, existing 
methods do not support either range queries 
or stream updates. 
For methods that do not support range queries, they have to re-compute from scratch for each query. Not until 2021 has a data structure called Zoom-Tucker been proposed to support range queries via block-wise preprocessing. However, Zoom-Tucker does not support stream updates and, more critically, suffers from a reluctant efficiency--accuracy tradeoff 
--- a large block size causes inaccuracy, while a small block size leads to inefficiency. This challenging problem has remained open for years prior to our work.
To solve this challenging problem, we propose \Ours{}, a data structure that can efficiently and accurately handle both range queries 
and stream updates. 
Our key idea is to design a new data structure that we call a \emph{stream segment tree} by generalizing the \emph{segment tree}, 
a data structure that was originally invented for computational geometry.
For a range query of length $L$, our \Ours{} can find $O(\log L)$ nodes (called the \emph{hit set}) from the tree and efficiently stitch their preprocessed decompositions to answer the range query. We also propose an algorithm to optimally prune the hit set via an approximation of 
subtensor 
decomposition. For the $T$-th stream update, our \Ours{} modifies only amortized $O(1)$ nodes and only $O(\log T)$ nodes in the worst case. 
Extensive evaluation demonstrates that our \Ours{} consistently achieves the highest efficiency and accuracy across four large-scale datasets. Our \Ours{} achieves \change{at least 3 times lower latency and at least 1.4 times smaller reconstruction error than Zoom-Tucker on all datasets}. 
The full version can be found at \FullPaperUrl.
\end{abstract}

\maketitle

\pagestyle{\vldbpagestyle}
\begingroup\small\noindent\raggedright\textbf{PVLDB Reference Format:}\\
\vldbauthors. \vldbtitle. PVLDB, \vldbvolume(\vldbissue): \vldbpages, \vldbyear.\\
\href{https://doi.org/\vldbdoi}{doi:\vldbdoi}
\endgroup
\begingroup
\renewcommand\thefootnote{}\footnote{\noindent
This work is licensed under the Creative Commons BY-NC-ND 4.0 International License. Visit \url{https://creativecommons.org/licenses/by-nc-nd/4.0/} to view a copy of this license. For any use beyond those covered by this license, obtain permission by emailing \href{mailto:info@vldb.org}{info@vldb.org}. Copyright is held by the owner/author(s). Publication rights licensed to the VLDB Endowment. \\
\raggedright Proceedings of the VLDB Endowment, Vol. \vldbvolume, No. \vldbissue\ %
ISSN 2150-8097. \\
\href{https://doi.org/\vldbdoi}{doi:\vldbdoi} \\
}\addtocounter{footnote}{-1}\endgroup

\ifdefempty{\vldbavailabilityurl}{}{
\vspace{.3cm}
\begingroup\small\noindent\raggedright\textbf{PVLDB Artifact Availability:}\\
The source code, data, and/or other artifacts have been made available at \url{\vldbavailabilityurl}.
\endgroup
}

\begin{figure}[t]
\centering
\captionsetup[subfigure]{justification=centering}
\setcounter{subfigure}{0}
\subfloat[March 2015]{\includegraphics[width=0.319\linewidth]{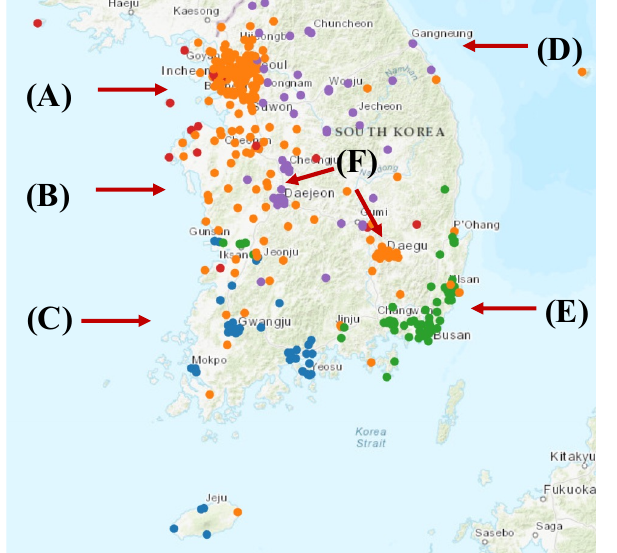}\label{fig:exp-case-2015}}
\subfloat[March 2016]{\includegraphics[width=0.32\linewidth]{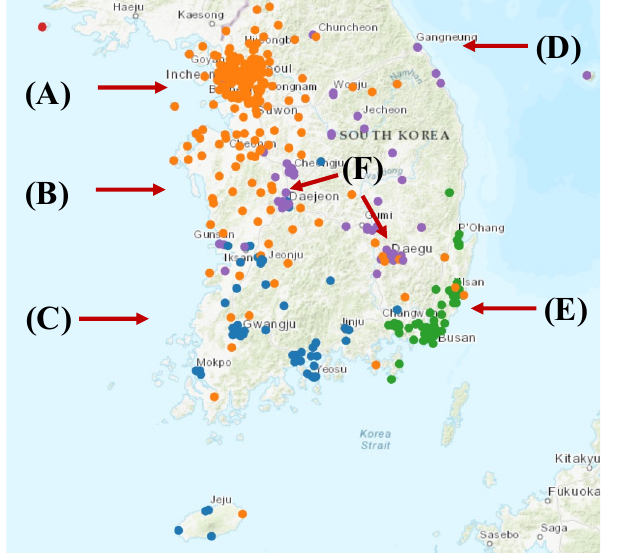}\label{fig:exp-case-2016}}
\subfloat[March 2017]{\includegraphics[width=0.324\linewidth]{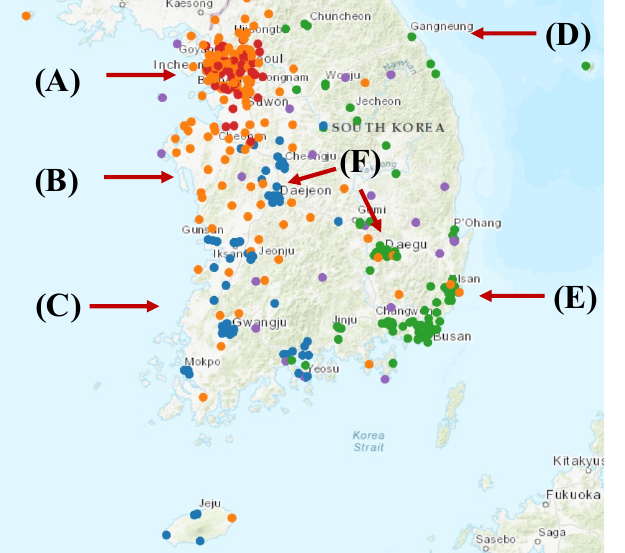}\label{fig:exp-case-2017}}
\caption{\change{
Case study on Air Quality data (see Section~\ref{subsec:exp-case-study})
}}
\label{fig:exp-case-air}
\end{figure}

\section{Introduction}

Tensor time series are ubiquitous in the real world, ranging from multimedia data such as videos and music to time series data such as stock prices, traffic volumes, climate, agriculture, environmental monitoring, and physical systems.
Tensor decomposition, such as CANDECOMP \cite{carroll1970analysis} / PARAFAC \cite{harshman1970foundations} (CP), PARAFAC2 \cite{kiers1999parafac2}, and Tucker \cite{tuckerals} decompositions, is a fundamental approach to tensor data analysis and performs an essential role in various applications including clustering \cite{zhao2005tricluster,huang2008simultaneous,cao2014robust}, dimension reduction \cite{wang2008tensor,kim2015compression}, anomaly detection \cite{koutra2012tensorsplat,fanaee2016tensor}, concept discovery \cite{jeon2015haten2,jeon2016scout,ahn2020gtensor}, and so on~\cite{yao2015context,spelta2017financial,de2017tensorcast,lacroix2019tensor}. 
As a generalization of singular value decomposition, Tucker decomposition \cite{tuckerals} seeks to approximately factorize a tensor into \emph{factor matrices} for each mode of the tensor and a small \emph{core tensor} characterizing the relations of the factor matrices. 
\change{Factor matrices and the core tensor can serve as the input for downstream data mining algorithms such as clustering \cite{zhao2005tricluster} and anomaly detection \cite{koutra2012tensorsplat}. 
}


\begin{table*}[t]
\centering
\caption{Comparison in functionalities, time complexities per query, and overall space complexities. See Table~\ref{tab:nomen} for the definitions of $L,D,T,r,p$. 
Empirically, our \Ours{} is faster than all other methods on a GPU because it is more parallelizable. Zoom-Tucker cannot use a large block size $b$, or otherwise it would incur high error. 
}
\label{tab:functionality}
\begin{tabular}{l|cc|ll}
\toprule
\textbf{Method}&\textbf{Range Query}&\textbf{Stream Update}&\textbf{Time Complexity per Query}&\textbf{Space Complexity}\\
\midrule
Tucker-ALS \cite{tuckerals}&\xmark&\xmark& $O(r\textcolor{red}{D^{p-1}L})$ &$O(\textcolor{red}{D^{p-1}}T)$\\
D-Tucker~\cite{dtucker}&\xmark&\cmark&$O(r^2\textcolor{red}{D^{p-2}L})$ &$O((\textcolor{red}{D^{p-1}}+r\textcolor{red}{D^{p-2}})T)$\\
Zoom-Tucker~\cite{zoomtucker}&\cmark&\xmark&$O\big(r^2\textcolor{red}{D\frac Lb}+r^2L+r^{p+1}\frac Lb\big)$&$O\big((r\textcolor{blue}{D}+r^p)\frac Tb+rT\big)$\\
\midrule
\textbf{\Ours{}} (ours)&\cmark&\cmark&$O(r^p\textcolor{blue}{D\log L}+r^{2p-2}(D+L)+\log T)$&$O((r\textcolor{blue}{D}+r^p)T+rT\log T)$\\
\bottomrule
\end{tabular}
\end{table*}

In the analysis of tensor time series, a common situation is to discover latent patterns in given time ranges \cite{jang2018zoom}. 
\change{
For example, given air quality data (a 3-way tensor time series $\BS X$ where $\SCR X_{t,i,j}$ represents the concentration value of air pollutant $j$ in location $i$ at time $t$), environmental scientists can find out which locations share similarity pollution patterns in March of each year by analyzing the Tucker decomposition of each month. (See Figure~\ref{fig:exp-case-air} for illustration and the case study in Section~\ref{subsec:exp-case-study} for detail.) 
Range queries are necessary here because Tucker decompositions vary across different time ranges due to the evolving nature of tensor time series, as shown in Figure~\ref{fig:exp-case-air}.}
Such needs give rise to an interesting research question: given an evolving tensor time series and multiple time range queries, how can we design a data structure that can efficiently and accurately compute Tucker decomposition for each time range?

Unfortunately, existing Tucker decomposition methods do not support either range queries or stream updates (see Table~\ref{tab:functionality}). For methods that do not support range queries, they have to re-compute from scratch for each range query. For example, D-Tucker \cite{dtucker} handles stream updates via slice-wise preprocessing, but it supports only full Tucker decomposition and cannot answer range queries efficiently. 
Not until 2021 has a method called Zoom-Tucker \cite{zoomtucker} been proposed to support range queries. 
Zoom-Tucker consists of two phases: a preprocessing phase and a query phase. First, the preprocessing phase divides the timespan into blocks and preprocesses the Tucker decomposition of each block. Next, the query phase answers time range queries by 
stitching the preprocessed blocks included in the query range. However, Zoom-Tucker does not support efficient stream updates (i.e., appending a new tensor slice) due to its block structure. 

Moreover, Zoom-Tucker suffers from a critical limitation: 
a large block size causes low accuracy for short ranges due to a high approximation error, while a small block size leads to inefficiency for long ranges that require to stitch many blocks.
It means that \ZoomTucker{} suffers from a reluctant tradeoff between accuracy and efficiency. How to avoid this tradeoff has been an open problem for years. Prior to our work, no existing method achieves both efficiency and accuracy for Tucker decomposition range queries. A crucial challenge here is how to design a more sophisticated data structure and organize 
preprocessed results to avoid the efficiency--accuracy tradeoff. What makes it even more challenging is that the data structure needs to efficiently support stream updates to the tensor time series. 

To solve this challenging problem, we propose a new data structure 
called \emph{\OursFull} (\Ours) that can efficiently and accurately handle both range queries of Tucker decomposition and stream updates. The key idea of our \Ours{} is to design a new data structure that we call a \emph{stream segment tree} by generalizing the \emph{segment tree} \cite{bentley1977algorithms}, a data structure that was originally invented for computational geometry.
For a range query of length $L$, our \Ours{} can find $O(\log L)$ nodes (called the \emph{hit set}) from the tree via our optimal pruning algorithm and efficiently stitch their preprocessed decompositions to answer the range query. 
Besides that, for the $T$-th stream update, our \Ours{} modifies only amortized $O(1)$ nodes and only $O(\log T)$ nodes in the worst case. 

The main contributions of this paper are summarized as follows: 
\begin{itemize}[noitemsep,topsep=0pt]
\item\change{\textbf{Data structure.} We design a new data structure \emph{stream segment tree} to efficiently handle stream updates. It is much faster here than the interval tree \cite{preparata2012computational} and the R-tree \cite{guttman1984r}. 
}
\item\change{\textbf{Hit set pruning algorithm.} We derive an efficient scheme to approximate subtensor decompositions and employ it to further reduce the size of the hit set for each query compared with the standard segment tree.}
\item\change{\textbf{Stitching algorithm.} We propose a new algorithm for stitching subtensor decompositions. Our stitching algorithm is more GPU-parallelizable and more numerically stable than Zoom-Tucker's stitching algorithm.}
\item\textbf{Theoretical guarantees.} We provide detailed theoretical guarantees for our proposed method in terms of time complexity, space complexity, and error analysis. 
\item\textbf{Empirical evaluation.} We conduct extensive experiments to evaluate our \Ours{} against state-of-the-art methods for Tucker decomposition on large-scale real-world tensor time series datasets. Our \Ours{} consistently achieves both the highest efficiency and the highest accuracy across all datasets. For example, our \Ours{} achieves at least 3 times lower latency and at least 1.4 times smaller reconstruction error on all datasets. 
\end{itemize}

\section{Preliminaries}
In this section, we present the preliminaries 
on Tucker decomposition and Tucker-ALS. 
Main symbols used in this paper are summarized in Table~\ref{tab:nomen}. Due to the space limit, preliminaries on basic tensor operations are deferred to the full version. 

\begin{table}[t]
\begin{center}\begin{small}
\caption{Nomenclature.}
\label{tab:nomen}
\begin{tabular}{ll}
\toprule
\textbf{Symbol}&\textbf{Description}\\
\midrule
$\VEC$&vectorization\\
$\Mat n$&mode-$n$ matricization\\
$\times_n$&mode-$n$ tensor-matrix product\\
$\|\cdot\|_\text F$&Frobenius norm\\
$\otimes$&matrix Kronecker product\\
$\,{}^\Tp$&matrix transpose\\
\midrule
$\BS X$ & a tensor time series\\
$p$ & number of modes of $\BS X$\\
$T$ & size of the temporal mode of $\BS X$\\
$D_2,\dots,D_p$ & sizes of non-temporal modes of $\BS X$\\
$D:=\max\{D_2,\dots,D_p\}$& maximum size of non-temporal modes\\
\midrule
$r_1,\dots,r_p$ & target sizes of Tucker decomposition\\
$r:=\max\{r_1,\dots,r_p\}$& maximum target size\\
$\BS G$ & core tensor in Tucker decomposition\\
$\BM U^\Pa1,\dots,\BM U^\Pa p$ & factor matrices in Tucker decomposition\\
$[\Ts,\Te)$ & time range of a query\\
$L:=\Te-\Ts$&length of the query range\\
\midrule
$\theta$ & threshold for hit set pruning\\
$\sqcup$ & disjoint union\\
\bottomrule
\end{tabular}
\end{small}\end{center}
\end{table}


Given a $p$-way tensor $\BS X\in\BB R^{D_1\times\cdots\times D_p}$ and target sizes $r_1,\dots,r_p$, \emph{Tucker decomposition} \cite{tuckerals} aims to find a \emph{core tensor} $\BS G\in\BB R^{r_1\times\cdots\times r_p}$ and column-orthonormal \emph{factor matrices} $\BM U^\Pa n\in\BB R^{D_n\times r_n}$ ($n=1,\dots,p$) that minimize
\begin{equation}
\|\BS G\times_1\BM U^\Pa1\cdots\times_p\BM U^\Pa p-\BS X\|_\text F^2.\label{eq:tuckerals}
\end{equation}
Tucker decomposition is a generalization of the singular value decomposition (SVD) of matrices. Similarly with SVD, a real-world tensor $\BS X$ typically has $\BS X\approx\BS G\times_1\BM U^\Pa1\cdots\times_p\BM U^\Pa p$ even for small target sizes $r_1,\dots,r_p$ \cite{liu2012tensor}. Hence, Tucker decomposition can serve as a compressed representation of a large 
tensor. \change{
Factor matrices and the core tensor can serve as the input for downstream data mining algorithms. For example, we can apply clustering \cite{zhao2005tricluster} or anomaly detection \cite{koutra2012tensorsplat} to the row vectors of the factor matrices $\BM U^{\Pa n}$ to discover similarity and dissimilarity patterns in the data. 
}

A classic approach to Tucker decomposition is Tucker's alternating least squares method ({Tucker-ALS}) \cite{tuckerals}. At each iteration, {Tucker-ALS} optimizes the factor matrix of only one mode while fixing all other factor matrices. Tucker \cite{tuckerals} shows that the optimal factor matrix $\BM U^\Pa n$ for each mode $n$ is the $r_n$ leading left singular vectors of the matrix
\begin{align}
&\Mat n(\BS X\times_1\BM U^{\Pa1\Tp}\cdots\times_{n-1}\BM U^{\Pa{n-1}\Tp}\times_{n+1}\BM U^{\Pa{n+1}\Tp}\cdots\times_p\BM U^{\Pa p\Tp}),\label{eq:als-mat}
\end{align}
and that the optimal core tensor $\BS G$ is 
\begin{equation}
\BS X\times_1\BM U^{\Pa1\Tp}\cdots\times_p\BM U^{\Pa p\Tp}.
\end{equation}


\section{Problem Definition}
In this section, we first introduce the problem definition and then present the design goals of \Ours{}.

A tensor time series is a tensor where one of the modes represents time. Without loss of generality, let the first mode be the \emph{temporal} mode. Let $\BS X\in\BB R^{T\times D_2\times\cdots\times D_{p}}$ be a $p$-way 
tensor time series, where the size of the temporal mode is $D_1:=T$, the sizes of non-temporal modes are $D_2,\dots,D_{p}$, and the number of modes is $p\ge2$. We call $T$ the \emph{timespan}. For tensors $\BS Y_1,\dots,\BS Y_s$ of the same shape except for the temporal mode, let 
${\left[\begin{smallmatrix}\BS Y_1\\\vdots\\\BS Y_s\end{smallmatrix}\right]}$
denote concatenation along the temporal mode. A \emph{tensor stream} $\BS X\in\BB R^{*\times D_2\times\cdots\times D_{p}}$ is a tensor time series with a growing temporal mode: at each time $T$, a tensor slice $\BS X_T\in\BB R^{D_2\times\cdots\times D_p}$ is observed and appended to the tensor stream. 

\begin{figure}[t]
\begin{center}
\includegraphics[width=\linewidth]{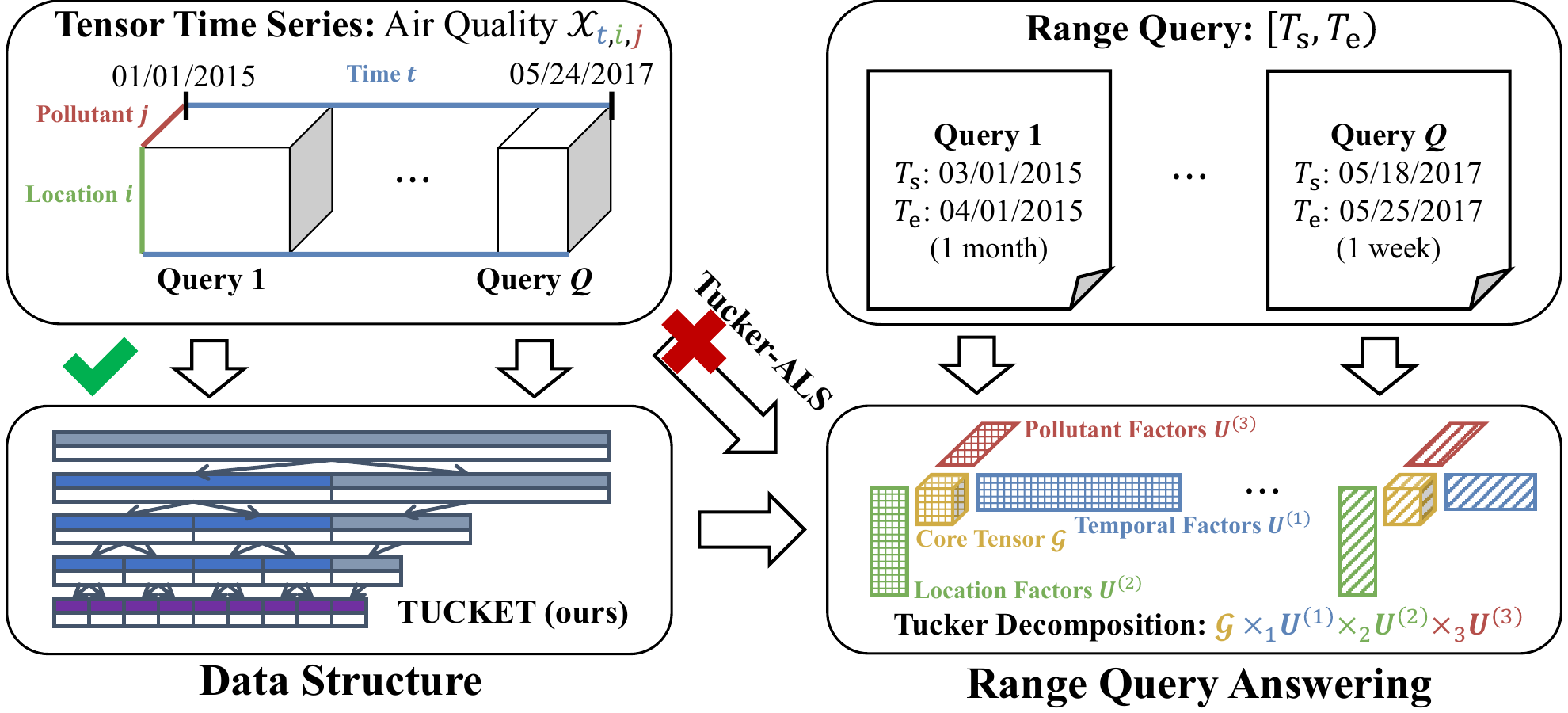}
\caption{Illustration of range queries of Tucker decomposition. It is inefficient to directly apply Tucker-ALS for each range query from scratch. Instead, we aim to design a data structure that can efficiently and accurately answer range queries of Tucker decomposition without re-computing from scratch for each query.}
\label{fig:example_rqa}
\end{center}
\end{figure}

Given a tensor stream $\BS X\in\BB R^{*\times D_2\times\cdots\times D_{p}}$ and the target sizes $(r_1,\dots,r_p)$ for Tucker decomposition, we aim to design a data structure that supports the following two operations. $T$ denotes the timespan before the operation.
\begin{itemize}[noitemsep,topsep=0pt]
\item\textbf{Range query of Tucker decomposition:} Given a time range $[\Ts,\Te)\subseteq[0,T)$, we need to efficiently compute the Tucker decomposition of the subtensor $\BS X_{[\Ts,\Te)}$ using the data structure. The output is a core tensor $\BS G\in\BB R^{r_1\times\cdots\times r_p}$ and factor matrices $\BM U^\Pa1\in\BB R^{(\Te-\Ts)\times r_1}$, $\BM U^\Pa2\in\BB R^{D_2\times r_2}$, \ldots, $\BM U^\Pa p\in\BB R^{D_p\times r_p}$.
\item\textbf{Stream update:} Given a tensor slice $\BS X_T\in\BB R^{D_2\times\cdots\times D_p}$, we need to append the tensor slice to the tensor stream and update the data structure accordingly.
\end{itemize}

\change{
The problem definition is illustrated in Figure~\ref{fig:example_rqa} with Air Quality data as an example. Air Quality data is a 3-way tensor time series $\BS X\in\BB R^{T\times D_2\times D_3}$ where $\SCR X_{t,i,j}$ represents the concentration value of air pollutant $j$ in location $i$ at time $t$. Consider a case study where we want to analyze air pollution patterns in March of each year. Here, each range query is a month (March 2015, March 2016, or March 2017; see Figure~\ref{fig:exp-case-air}). With the help of Tucker decomposition range queries, we can find out which locations share similarity pollution patterns in each month by clustering the row vectors of the location factor matrix $\BM U^\Pa2$ of Tucker decomposition of each month. Results of the case study are shown in Figure~\ref{fig:exp-case-air}. See Section~\ref{subsec:exp-case-study} for detail.
}

We design our \Ours{} with the following three design goals for a tensor time series data structure. 

\Par{G1: Frequent arbitrary range queries}
We focus on the situation where queries are frequent, and we only consider online algorithms (i.e., the algorithm has to process each operation sequentially as soon as it arrives). Thus, we need to optimize the worst-case complexity of answering each single range query. Besides that, we do not assume any extra prior knowledge about the distribution of possible range queries. Hence, we focus on the worst-case complexity parameterized by: 
(i) the maximum size of non-temporal modes, $D:=\max\{D_2,\dots,D_{p}\}$; (ii) the timespan, $T$; (iii) the maximum target size, $r:=\max\{r_1,\dots,r_p\}$; and (iv) the length of the query range, $L:=\Te-\Ts$. 

\Par{G2: Periodic stream updates}
In real-world use cases, stream updates to the tensor stream are typically periodic but may not be as frequent as range queries. For instance, in the stock example in Figure~\ref{fig:example_rqa}, the tensor stream is updated in a daily basis. Hence, we allow the stream update operation to be a little more expensive than range queries. Nonetheless, we still aim to optimize the time complexity of stream updates so that it scales at most sublinearly w.r.t.\ the total size $TD_2\cdots D_p$ of the current tensor time series.


\Par{G3: Nearly linear space}
Every preprocessing-based data structure is essentially a space--time tradeoff \cite{cobham1966recognition,hellman1980cryptanalytic} in that more preprocessing leads to higher efficiency. On the one hand, if no preprocessing were allowed, it would be impossible to outperform the na\"ive algorithm that simply computes from scratch for each query. On the other hand, if unlimited preprocessing were allowed, then a trivial algorithm would be to preprocess the answers for all possible $O(T^2)$ query ranges. To rule out such trivial algorithms, we require that the space used by the preprocessing phase should be nearly linear w.r.t.\ the timespan $T$, i.e., $\TLD O(T)$. As a remark, we assume that $p=O(1)$ and $r=o(D)$ in our complexity analysis.



\begin{figure}[t]
\begin{center}
\includegraphics[width=\linewidth]{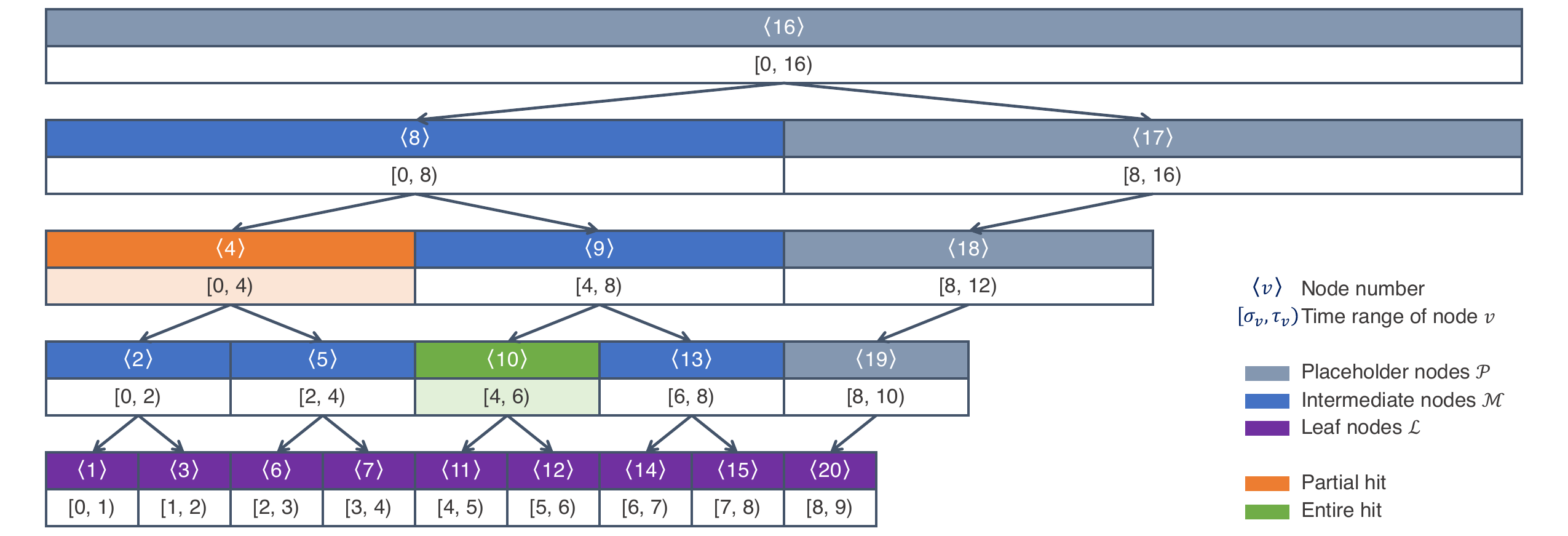}
\definecolor{EntireHit}{RGB}{113,173,71}
\definecolor{PartialHit}{RGB}{237,125,50}
\caption{Illustration of our \Ours{} over timespan $[0,9)$. It has height $5=\lceil\log_29\rceil+1$. When answering a range query $[\Ts,\Te)=[1,6)$ with pruning threshold $\theta=0.7$, node $\langle4\rangle$ is a \textcolor{PartialHit}{partial hit} because $|[1,6)\cap[0,4)|=|[1,4)|\ge 0.7|[0,4)|$, and node $\langle10\rangle$ is an \textcolor{EntireHit}{entire hit} because $[4,6)\subseteq[1,6)$. 
}
\label{fig:illust}
\end{center}
\end{figure}

\section{\Ours{}: Data Structure Design}\label{sec:design}
In this section, we detail the design of our proposed data structure \OursFull{} (\Ours{}). We first introduce the challenges of the problem and our key ideas in Section~\ref{ssec:design-ideas} and then present the design of our stream segment tree in Section~\ref{ssec:design-tree}. Due to the space limit, detailed proofs are deferred to the full version.

\subsection{Challenges \& Key Ideas}\label{ssec:design-ideas}
Our \Ours{} is designed for efficient Tucker decomposition over time ranges. An essential challenge of this problem is that the Tucker decomposition operation does not form either an Abelian group or a semigroup w.r.t.\ tensor concatenation. This means that we cannot use classic data structures such as prefix sum tables, Fenwick trees \cite{fenwick1994new}, or Cartesian trees \cite{bender2000lca} to compute Tucker decomposition over time ranges. The state-of-the-art idea 
(proposed in \cite{zoomtucker}) is to divide the time range $[0,T)$ into disjoint blocks $[0,b),[b,2b),\ldots$ of equal size $b$ and preprocess the Tucker decomposition for each block. However, this 
idea suffers from an inevitable tradeoff between efficiency and accuracy: a small block size $b$ leads to inefficiency for long time ranges; a large block size causes inaccuracy for time ranges shorter than $b$.
\change{
This is because a larger block size $b$ corresponds to a coarser-grained preprocessing, which may fail to preserve finer-grained patterns that exist temporarily in time ranges shorter than $b$. For instance, if the block size $b$ corresponds to a month, then Zoom-Tucker will be likely to yield inaccurate results when querying for a week.
}
To the best of our knowledge, our work is the first data structure that addresses this challenge without the efficiency--accuracy tradeoff.

Our first key idea, aiming to address this challenge, is to divide the time range $[0,T)$ into carefully designed uneven, overlapping blocks such that every range query can be expressed as a disjoint union of a small number of ``blocks.'' Our idea leads us to the segment tree \cite{bentley1977algorithms}, a data structure from computational geometry. Crucially, in a segment tree over a timespan $[0,T)$, we can associate each node $v$ with a subrange $[\sigma_v,\tau_v)$ such that every range $[\Ts,\Te)\subseteq[0,T)$ can be expressed as a disjoint union of at most $O(\log T)$ nodes. We call these nodes the \emph{hit set} of $[\Ts,\Te)$. Then, we can answer each range query efficiently by stitching the preprocessed subtensor decompositions of the hit set.

However, the original segment tree has a static structure and thus does not support stream data. Hence, we cannot simply apply the segment tree. As our second key idea, we propose a new data structure called the \emph{stream segment tree} to support stream updates in our setting. Our key insight regarding why segment trees are static is that it is required to be a full binary tree in order to maintain a depth of $O(\log T)$. Instead, we propose to relax this requirement and allow our stream segment tree to be incomplete. To maintain a depth of $O(\log T)$, we propose extending the root instead of only extending leaf nodes like typical balanced search trees \cite{guibas1978dichromatic}. Furthermore, each stream update operation only involves amortized $O(1)$ nodes to be updated. We will describe the detailed design of our stream segment tree in Section~\ref{ssec:design-tree}.

\subsection{Stream Segment Tree}\label{ssec:design-tree}
The basic structure of our \Ours{} is a \emph{stream segment tree}, which is a generalization of the segment tree \cite{bentley1977algorithms} from computational geometry. 
Here, we detail the design of our stream segment tree.

As we have discussed, the original segment tree does not support stream updates. To enable stream updates, our key idea is to employ an expanded segment tree with an incomplete structure that reserves the position of future nodes to support efficient updates but does not construct them explicitly. 

Specifically, a stream segment tree is a binary tree where each node $v$ is associated with a time range $[\sigma_v,\tau_v)$. There are three types of nodes in a stream segment tree:
\emph{leaf} nodes $\CAL L$, \emph{intermediate} nodes $\CAL M$, and \emph{placeholder} nodes $\CAL P$. 

\Par{Leaf nodes}
\change{Each leaf node $v\in\CAL L$ represents a tensor slice $\BS X_t$, so we let the leaf range be $[\sigma_v,\tau_v):=[t,t+1)$.} After $T$ updates, we require time ranges of leaf nodes to be \emph{contiguous}, i.e., the whole range $[0,T)$ is a disjoint union of the time ranges of leaf nodes:
\AL{\bigsqcup_{v\in\CAL L}[\sigma_v,\tau_v)=[0,T).}
Besides that, we preprocess the Tucker decomposition of $\BS X_{[t,t+1)}$ and store it as $\BS Y_v:=\BS H_v\times_1\BM V_v^\Pa1\cdots\times_p\BM V_v^\Pa p$. Note that we only compute the core tensor $\BS H_v$ and factor matrices $\BM V_v^\Pa1,\ldots,\BM V_v^\Pa p$ but never actually compute $\BS Y_v$, i.e., $\BS Y_v$ is only a symbol to refer to the product. 

\Par{Intermediate nodes}
For each intermediate node $v\in\CAL M$ with time range $[\sigma_v,\tau_v)$, it represents a subtensor $\BS X_{[\sigma_v,\tau_v)}$. An intermediate node has exactly two children nodes $u_1,u_2\in\CAL L\cup\CAL M$ that have \emph{adjoint} time ranges: 
\AL{[\sigma_v,\tau_v):=[\sigma_{u_1},\tau_{u_1})\sqcup[\sigma_{u_2},\tau_{u_2}).}
Besides that, similarly with leaf nodes, we preprocess the Tucker decomposition of $\BS X_{[\sigma_v,\tau_v)}$ and store it as $\BS Y_v:=\BS H_v\times_1\BM V_v^\Pa1\cdots\times_p\BM V_v^\Pa p$. Here, $\BS Y_v$ is still only a symbol to refer to the product. 

\Par{Placeholder nodes}
A placeholder node $v\in\CAL P$ represents a future subtensor where some of its slices have not been observed yet. Formally, if the current observed timespan is $[0,T)$. then the time range $[\sigma_v,\tau_v)$ of the placeholder node has $\sigma_v<T$ and $\tau_v\ge T$. A placeholder node has either one or two children. If $v$ has only one child, then it has a left child $u_1\in\CAL L\cup\CAL M\cup\CAL P$; otherwise, $v$ has a left child $u_1\in\CAL L\cup\CAL M$ and a right child $u_2\in\CAL P$. Similarly with intermediate nodes, we require its children to have \emph{adjoint} time ranges, i.e., $[\sigma_v,\tau_v):=[\sigma_{u_1},\tau_{u_1})\sqcup[\sigma_{u_2},\tau_{u_2})$. Meanwhile, unlike leaf and intermediate nodes, since the subtensor of the time range $[\sigma_v,\tau_v)$ has not been completely observed yet, we do not preprocess the Tucker decomposition of a placeholder node and do not allow placeholder nodes to be in the hit set.

\Par{Logarithmic height}
To efficiently answer range queries, we want that every range query $[\Ts,\Te)\subseteq[0,T)$ can be divided into a disjoint union of a small number of nodes (called the \emph{hit set}) in the stream segment tree. We make a key observation on the relation between the size of the hit set and the height of the stream segment tree, as formally stated in Lemma~\ref{LEM:hit-height}.

\begin{lemma}[Hit set v.s.\ height]\label{LEM:hit-height}
Given a stream segment tree of height $h\ge1$, for every range query, there exists a hit set of size $\le\max\{2(h-1),1\}$. 
\end{lemma}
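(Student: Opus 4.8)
The plan is to exhibit one explicit hit set, namely the canonical top-down segment-tree decomposition of the query, and then bound its cardinality by a level-counting argument adapted to the incomplete structure of our stream segment tree. First I would formalize the hit set via a recursive procedure $\mathrm{Query}(v)$ that starts at the root and compares the query range $[\Ts,\Te)$ against each node range $[\sigma_v,\tau_v)$: if $[\sigma_v,\tau_v)\subseteq[\Ts,\Te)$ and $v\notin\CAL P$, it selects $v$ and stops; if the two ranges are disjoint, it returns nothing; otherwise (a proper partial overlap) it recurses into the children of $v$. Because the children of every non-leaf node partition its range via the adjoint property, the selected nodes form a disjoint union equal to $[\Ts,\Te)$, so the output is a valid hit set; it then remains only to bound how many nodes it selects.

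The heart of the argument is a lemma on \emph{one-sided} queries. Call a query a \emph{prefix} (resp.\ \emph{suffix}) of a subtree rooted at $u$ if it has the form $[\sigma_u,b)$ (resp.\ $[a,\tau_u)$). I would prove by induction on the subtree height $h'$ that $\mathrm{Query}$ selects at most $h'-1$ nodes on any proper one-sided query. The inductive step observes that at a node with two children, a prefix query either recurses into a single child without selecting anything (which also covers single-child placeholders), or selects exactly one whole child and continues as a proper prefix query in the other child of strictly smaller height; either way at most one node is selected before descending one level, and a proper one-sided query on a leaf ($h'=1$) is empty, giving the base case.

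I would then locate the \emph{split node} $w$, the first node on the root-to-query path at which $[\Ts,\Te)$ straddles both children. No node strictly above $w$ is ever selected, since there the query lies inside a single child and so the node's range strictly contains the query. At $w$ the query decomposes into a suffix query on $w$'s left child and a prefix query on $w$'s right child, each the root of a subtree of height at most $h-1$. Applying the one-sided lemma to both sides yields a hit set of size at most $2\bigl((h-1)-1\bigr)\le 2(h-1)$. The degenerate cases close the argument: if $[\Ts,\Te)=[\sigma_{\mathrm{root}},\tau_{\mathrm{root}})$ the root alone is selected (size $1$), and when $h=1$ the single leaf is the only possible selection, matching $\max\{2(h-1),1\}=1$.

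The main obstacle I anticipate is not the combinatorics but checking that the incomplete, placeholder-laden structure cannot inflate the count beyond the complete-tree bound. The key point to nail down is that a placeholder node $v\in\CAL P$, although forbidden from the hit set, never forces an extra split: a genuine placeholder has unobserved slices, so $\tau_v>\Te$, whence the range of $v$ is never contained in $[\Ts,\Te)$ and even an idealized complete tree would recurse through $v$ rather than select it. Single-child placeholders merely consume a level without selecting, which can only decrease the count. Once this is established, the level-counting above carries over verbatim and the bound $\le\max\{2(h-1),1\}$ follows.
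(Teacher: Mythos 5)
Your proposal is correct and takes essentially the same route as the paper's proof: the paper's three-case induction on height (prefix, suffix, straddling, with recursions $f_1(h)\le f_1(h-1)+1$, $f_2(h)\le f_2(h-1)+1$, and $f_3(h)\le\max\{f_3(h-1),\,f_1(h-1)+f_2(h-1)\}$) is exactly your one-sided lemma plus split-node decomposition, and your explicit check that placeholder nodes cannot inflate the count is a point the paper's own proof glosses over entirely. The only blemish is the intermediate bound $2\bigl((h-1)-1\bigr)$ at the split node, which is not valid when a side covers a full child of the split node (e.g., $h=2$ with both leaf children entirely hit gives $2$ selected nodes, not $0$); since each such non-proper side contributes exactly $1$ node, the correct count is $2\max\{h-2,1\}\le\max\{2(h-1),1\}$, so your final bound stands.
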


\begin{proof}[\change{Proof sketch}]
\change{
First, if the query range $[\Ts,\Te)$ is a prefix or a suffix of the time range $[\sigma_v,\tau_v)$ of a node $v$ (i.e., $\Ts=\sigma_v$ or $\Te=\tau_v$), then an induction shows that there exists a hit set of size $\le h$. Next, if the query range is neither a prefix nor a suffix of the time range of any node, then we can show that there exists two non-root nodes $v_1,v_2$ such that $\tau_{v_1}=\sigma_{v_2}$ and that $[\Ts,\Te)=[\sigma_{v_1},\tau_{v_1})\sqcup[\sigma_{v_2},\tau_{v_2})$. In this case, $[\Ts,\Te)\cap[\sigma_{v_1},\tau_{v_1})$ is a suffix of $[\sigma_{v_1},\tau_{v_1})$, and $[\Ts,\Te)\cap[\sigma_{v_2},\tau_{v_2})$ is a prefix of $[\sigma_{v_2},\tau_{v_2})$. Since $v_1,v_2$ are not the root, then $[\Ts,\Te)$ has a hit set of size $\le2(h-1)$. Together, the size of the hit set is $\le\max\{2(h-1),h\}=\max\{2(h-1),1\}$.
}
\end{proof}

Lemma~\ref{LEM:hit-height} motivates us to require the stream segment tree to have a small height. We show that the stream segment tree can indeed have a logarithmic height w.r.t.\ the time range $T$, as formally stated in Theorem~\ref{THM:height-T}.

\begin{theorem}[Logarithmic height]\label{THM:height-T}
There exists a stream segment tree structure over range $[0,T)$ of height $\le\lceil\log_2T\rceil+1$.
\end{theorem}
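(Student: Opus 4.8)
My plan is to prove the statement constructively: exhibit one concrete stream segment tree over $[0,T)$ and bound its height directly. The key observation is that the height requirement is nothing more than the height of a balanced binary tree on $T$ ordered leaves, so the pruning/placeholder machinery is not actually needed for \emph{mere existence} (one may take $\CAL P=\emptyset$); placeholders only become essential later, for the update complexity. Concretely, I would build the tree by the standard top-down halving used for segment trees: over a contiguous range of length $n$, if $n=1$ create a single leaf covering $[\sigma,\sigma+1)$, and otherwise create an intermediate node whose two children are built recursively over $[\sigma,\sigma+\lceil n/2\rceil)$ and $[\sigma+\lceil n/2\rceil,\sigma+n)$. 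These children have adjoint ranges whose disjoint union is the parent range, the leaves are exactly the singletons $[t,t+1)$, and $\bigsqcup_{v\in\CAL L}[\sigma_v,\tau_v)=[0,T)$, so the result is a valid stream segment tree by the definitions in Section~\ref{ssec:design-tree}.

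For the height, let $H(n)$ denote the number of levels of the tree built on $n$ leaves. Since $H$ is nondecreasing and $\lfloor n/2\rfloor\le\lceil n/2\rceil$, the recursion gives $H(n)=1+H(\lceil n/2\rceil)$ for $n\ge2$, with $H(1)=1$. I would then prove $H(n)\le\lceil\log_2 n\rceil+1$ by induction on $n$. The base case $n=1$ is immediate. For the inductive step the one arithmetic fact I need is $\lceil\log_2\lceil n/2\rceil\rceil=\lceil\log_2 n\rceil-1$ for all $n\ge2$, which follows by writing $k=\lceil\log_2 n\rceil$, so that $2^{k-1}<n\le2^k$, and then checking $2^{k-2}<\lceil n/2\rceil\le2^{k-1}$. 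Combining, $H(n)=1+H(\lceil n/2\rceil)\le 1+(\lceil\log_2 n\rceil-1)+1=\lceil\log_2 n\rceil+1$, which at $n=T$ is the claimed bound.

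To tie the construction to the actual streaming data structure (and to make the placeholder view explicit for the later update results), I would also note that this balanced tree is precisely the subtree of the perfect binary tree $\CAL T_k$ over $[0,2^k)$, with $k=\lceil\log_2 T\rceil$, induced by keeping every node whose range meets $[0,T)$: such a node is an ordinary leaf or intermediate node when its range lies inside $[0,T)$ and a placeholder when its range straddles the frontier $T$ (i.e.\ $\sigma_v<T\le\tau_v$), while nodes lying entirely in $[T,2^k)$ are left unconstructed. Since every materialized node is a node of $\CAL T_k$, whose height is $k+1$, the bound $\lceil\log_2 T\rceil+1$ follows a second way and matches the root-extension picture from Section~\ref{ssec:design-ideas}.

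I expect the only genuinely delicate step to be verifying that the straddling nodes obey the placeholder rules exactly as defined --- namely that each such node has either a single left child in $\CAL L\cup\CAL M\cup\CAL P$ (when its entire right subtree lies beyond $T$ and is pruned) or a left child in $\CAL L\cup\CAL M$ together with a right placeholder child --- which reduces to a short case analysis on where $T$ falls relative to the node's midpoint. The height estimate itself is routine once the identity $\lceil\log_2\lceil n/2\rceil\rceil=\lceil\log_2 n\rceil-1$ is in hand.
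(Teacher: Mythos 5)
Your proof is correct, but it takes a genuinely different route from the paper's. The paper proves existence \emph{algorithmically}: it builds the tree by running the \textsc{Append} procedure of Section~\ref{ssec:op-append} on $\BS X_0,\dots,\BS X_{T-1}$ and then invokes Theorem~\ref{THM:alg-log-height}, whose proof is an induction on the insertion process (tracking the remaining capacity $\max\{\tau_v-T,0\}$ of each subtree and comparing leaf counts with perfect binary trees). Your argument is instead a direct static construction, and it comes in two self-contained variants: (a) the balanced halving tree with $\CAL P=\emptyset$, which shows that the bare existence claim needs none of the placeholder machinery and reduces to the classical height bound for balanced binary trees via the identity $\lceil\log_2\lceil n/2\rceil\rceil=\lceil\log_2 n\rceil-1$; and (b) the induced subtree of the perfect binary tree over $[0,2^k)$, $k=\lceil\log_2 T\rceil$, whose height bound is immediate. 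What your route buys is independence from the later algorithmic analysis --- the paper's statement of Theorem~\ref{THM:height-T} is pure existence, and your proof discharges it without the forward reference to Theorem~\ref{THM:alg-log-height}; what the paper's route buys is the stronger fact actually needed downstream, namely that the logarithmic-height tree is the one maintained online by \textsc{Append}, so existence and maintainability are established in one stroke. Two minor slips in your write-up, neither fatal: your claim that construction (a) ``is precisely'' construction (b) is false --- (a) splits $[0,9)$ into $[0,5)$ and $[5,9)$, while (b) has root $[0,16)$ with dyadic children, so they are different trees; this is harmless because each construction independently yields the bound, but you should drop the identification (only (b) coincides with the tree that \textsc{Append} produces). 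Also, you assert monotonicity of $H$ without proof; the cleanest fix is to run strong induction on the bound $H(n)\le\lceil\log_2 n\rceil+1$ directly, bounding both children by $\lceil\log_2\lceil n/2\rceil\rceil+1$, which avoids needing monotonicity at all.
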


\begin{proof}[\change{Proof sketch}]
\change{Using the algorithm in Section~\ref{ssec:op-append} to append the tensor slices $\BS X_0,\dots,\BS X_{T-1}$ one by one, we can build a stream segment tree over $[0,T)$. By Theorem~\ref{THM:alg-log-height}, this stream segment tree has height $\lceil\log_2T\rceil+1$.}
\end{proof}

The tree structure in Theorem~\ref{THM:height-T} not only exists in theory but can also be maintained efficiently. We will present an efficient algorithm to maintain the logarithmic height 
in a stream update in Section~\ref{ssec:op-append}. From now on, we will refer to the tree structure in Theorem~\ref{THM:height-T} simply as the ``stream segment tree.''

\begin{proposition}[Space complexity]
\label{PROP:space_complexity}
The space complexity of a stream segment tree over range $[0,T)$ is $O((rD+r^p)T+rT\log T)$.
\end{proposition}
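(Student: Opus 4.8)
The plan is to bound the total storage by summing, over all nodes, the space occupied by the preprocessed Tucker decomposition stored there. First I would recall what each node holds: a leaf or intermediate node $v$ stores a core tensor $\BS H_v\in\BB R^{r_1\times\cdots\times r_p}$ of size $O(r^p)$, a temporal factor matrix $\BM V_v^\Pa1\in\BB R^{(\tau_v-\sigma_v)\times r_1}$ of size $O(r(\tau_v-\sigma_v))$, and the $p-1=O(1)$ non-temporal factor matrices $\BM V_v^\Pa2,\dots,\BM V_v^\Pa p$ of total size $O(rD)$; placeholder nodes store no decomposition and contribute only $O(1)$ each. Thus the per-node cost splits into a part that is independent of the node's range length (the core and non-temporal factors, costing $O(r^p+rD)$) and a part that grows with the range length (the temporal factor, costing $O(r(\tau_v-\sigma_v))$).

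Next I would count the nodes. There are exactly $T$ leaf nodes. Since every intermediate node has two children and every leaf has none, a standard binary-tree counting argument (the number of two-child nodes equals the number of leaves minus one) bounds the intermediate nodes by $O(T)$. The placeholder nodes all have $\sigma_v<T$ and $\tau_v>T$, so their ranges all contain the frontier $T$; since ranges at a common depth are disjoint, at most one placeholder occurs per depth, and by Theorem~\ref{THM:height-T} there are only $O(\log T)$ of them.

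Summing the range-independent part over the $O(T)$ leaf and intermediate nodes immediately gives $O((r^p+rD)\cdot T)=O((rD+r^p)T)$. For the temporal factors, the key idea is to group the nodes by depth: at each fixed depth the node ranges are pairwise-disjoint subranges of $[0,T)$, so their lengths sum to at most $T$. By Theorem~\ref{THM:height-T} the height is $\lceil\log_2 T\rceil+1=O(\log T)$, so summing over all depths gives $\sum_v(\tau_v-\sigma_v)=O(T\log T)$, and hence $O(rT\log T)$ for all temporal factors. Adding the two contributions yields the claimed $O((rD+r^p)T+rT\log T)$; the $O(\log T)$ placeholders are negligible.

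The main obstacle is the temporal-factor bound. For the core and non-temporal factors, merely counting nodes suffices because their size does not depend on the node's range; but the temporal factor scales with $\tau_v-\sigma_v$, and a na\"ive node-by-node sum would overcount badly, since a single slice can lie in node ranges at every depth. The level-by-level disjointness observation, combined with the logarithmic height from Theorem~\ref{THM:height-T}, is precisely what collapses this sum to $O(T\log T)$ and produces the extra $rT\log T$ term rather than something larger.
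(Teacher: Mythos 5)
Your proof is correct and follows essentially the same route as the paper's: charge $O(rD+r^p)$ per node across the $O(T)$ nodes for the cores and non-temporal factors, then bound the temporal factors level-by-level using the disjointness of ranges at each depth (giving $O(rT)$ per level) and the $O(\log T)$ height from Theorem~\ref{THM:height-T}. Your version merely adds detail the paper leaves implicit, namely the explicit counting of intermediate and placeholder nodes.
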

\begin{proof}[\change{Proof sketch}]
\change{In a stream segment tree over the range $[0,T)$, there are $O(T)$ nodes each of which has $p-1$ non temporal factor matrices of the size $O(rD)$ and a core tensor of the size $O(r^p)$.
In addition, at each level of the tree, the sum of the sizes of temporal factor matrices is $O(rT)$.
Therefore, the space complexity of the stream segment tree is $O((prD+r^p)T + rT\log T)$.}
\end{proof}

\Par{Example}
An example of our stream segment tree over timespan $[0,9)$ is illustrated in Figure~\ref{fig:illust}. It has height $5=\lceil\log_29\rceil+1$. 
\change{Leaf nodes $\langle1\rangle,\langle3\rangle,\langle6\rangle,\langle7\rangle,\langle11\rangle,\langle12\rangle,\langle14\rangle,\langle15\rangle,\langle20\rangle$ represent tensor slices $\BS X_0,\dots,\BS X_8$, respectively. Intermediate nodes store preprocessed results.
We will also use this example in the subsequent sections to illustrate other operations.}

\section{\Ours{}: Core Algorithms}\label{sec:algo}
In this section, we present two core algorithms that will be used in the operations of \Ours{} in Section~\ref{sec:op}. We describe how to optimally prune the hit set in Section~\ref{ssec:core-prune} and how to stitch subtensor decompositions in the hit set in Section~\ref{ssec:core-stitch}.

\subsection{Optimally Pruning the Hit Set}\label{ssec:core-prune}
The first core algorithm of \Ours{} is finding a small hit set for each range query. By Lemma~\ref{LEM:hit-height} \& Theorem~\ref{THM:height-T}, we have shown that the hit set has a small size $O(\log T)$. Although this size cannot be improved for general operations, here we present a key observation about the Tucker decomposition of a subtensor and leverage this observation to further prune the hit set.

\Par{Approximating a subtensor decomposition}
Here we present our key observation about the Tucker decomposition of a subtensor. Suppose that a tensor $\BS X_{[\sigma_v,\tau_v)}$ has Tucker decomposition $\BS H_v\times_1\BM V_v^\Pa1\cdots\times_p\BM V_v^\Pa p$. Due to the low-rank nature of real-world tensors \cite{liu2012tensor}, they can typically be well approximated by Tucker decomposition:
\AL{\BS X_{[\sigma_v,\tau_v)}\approx\BS H_v\times_1\BM V_v^\Pa1\cdots\times_p\BM V_v^\Pa p.}
We observe that for real-world tensors, there typically exists a threshold $0<\theta<1$ (see Section~\ref{subsec:exp-sensitivity}) such that for a sub-range $[\Ts',\Te')\subseteq[\sigma_v,\tau_v)$ with $\big|[\Ts',\Te')\big|\ge\theta\big|[\sigma_v,\tau_v)\big|$ (i.e., the sub-range $[\Ts',\Te')$ is not too small compared with $[\sigma_v,\tau_v)$), the subtensor $\BS X_{[\Ts',\Te')}$ can be well approximated by 
\AL{\BS X_{[\Ts',\Te')}&=(\BS X_{[\sigma_v,\tau_v)})_{[\Ts'-\sigma_v,\Te'-\sigma_v)}\\
&\approx(\BS H_v\times_1\BM V_v^\Pa1\times_2\BM V_v^\Pa2\cdots\times_p\BM V_v^\Pa p)_{[\Ts'-\sigma_v,\Te'-\sigma_v)}\\
&=\BS H_v\times_1(\BM V_v^\Pa1)_{[\Ts'-\sigma_v,\Te'-\sigma_v)}\times_2\BM V_v^\Pa2\cdots\times_p\BM V_v^\Pa p
.}
This almost yields an approximate Tucker decomposition of $\BS X_{[\Ts',\Te')}$, except that the temporal factor matrix $(\BM V_v^\Pa1)_{[\Ts'-\sigma_v,\Te'-\sigma_v)}$ is not necessarily column-orthonormal. To make it column-orthonormal, we can first compute a QR decomposition \cite{gander1980algorithms} $(\BM V_v^\Pa1)_{[\Ts'-\sigma_v,\Te'-\sigma_v)}=:\TLD{\BM Q}\TLD{\BM R}$ (where $\TLD{\BM Q}$ is column-orthonormal) and then use the reverse associativity\footnote{\label{footnote:reverse_asso}The reverse associativity means that $\BS Z\times_n\BM A\times_n\BM B=\BS Z\times_n(\BM B\BM A)$.}~\cite{kolda2009tensor} of $\times_1$ to give a Tucker decomposition:
\AL{
\BS X_{[\Ts',\Te')}&\approx\BS H_v\times_1(\BM V_v^\Pa1)_{[\Ts'-\sigma_v,\Te'-\sigma_v)}\times_2\BM V_v^\Pa2\cdots\times_p\BM V_v^\Pa p\\
&=\BS H_v\times_1(\TLD{\BM Q}\TLD{\BM R})\times_2\BM V_v^\Pa2\cdots\times_p\BM V_v^\Pa p\\
&=(\BS H_v\times_1\TLD{\BM R})\times_1\TLD{\BM Q}\times_2\BM V_v^\Pa2\cdots\times_p\BM V_v^\Pa p\label{eq:partial-tucker}
,}
where the core tensor is $\BS H_v\times_1\TLD{\BM R}$, and the temporal factor matrix is $\TLD{\BM Q}$. In this way, we can efficiently compute an approximate Tucker decomposition of a subtensor $\BS X_{[\Ts',\Te')}$ using only a QR decomposition and a mode-$1$ product and do not need to further divide $[\Ts',\Te')$ into smaller sub-ranges. This helps to reduce the size of the hit set.

\Par{Formulation of hit set pruning}
This key observation motivates us to consider \emph{partial hits}. Let $0<\theta<1$ denote the threshold 
above. We call a node $v\in\CAL M$ a \emph{partial hit} of a range query $[\Ts,\Te)$ if $\big|[\Ts,\Te)\cap[\sigma_v,\tau_v)\big|\ge\theta\big|[\sigma_v,\tau_v)\big|$ and $[\sigma_v,\tau_v)\not\subseteq[\Ts,\Te)$; we call it an \emph{entire hit} of $[\Ts,\Te)$ if $[\sigma_v,\tau_v)\subseteq[\Ts,\Te)$. Using Eq.~\eqref{eq:partial-tucker}, we can efficiently approximate the Tucker decomposition of $\BS X_{[\Ts,\Te)\cap[\sigma_v,\tau_v)}$. Hence, we can reduce the size of the hit set by allowing partial hits in the hit set. Formally, minimizing the size of the hit set $\CAL S$ can be formulated as the following optimization problem:
\AL{\min_{\CAL S\subseteq\CAL L\cup\CAL M}&|\CAL S|,\label{eq:prune-obj}\\
\text{s.t.\quad\,}&[\Ts,\Te)=\textstyle\bigsqcup\limits_{v\in\CAL S}[\Ts,\Te)\cap[\sigma_v,\tau_v),\label{eq:prune-c1}\\
&\big|[\Ts,\Te)\cap[\sigma_v,\tau_v)\big|\ge\theta\big|[\sigma_v,\tau_v)\big|,\;\;\forall v\in\CAL S\label{eq:prune-c2}
.}

\ALG{recall}{(\textsc{Recall}): Finding a pruned hit set
}{
\REQUIRE{current node $v$; query range $[\Ts,\Te)$}
\ENSURE{a pruned hit set of $[\Ts,\Te)$ in the subtree rooted at $v$}
\IF{$v\in\CAL L\cup\CAL M$ \textbf{and} $\big|[\Ts,\Te)\cap[\sigma_v,\tau_v)\big|\ge\theta\big|[\sigma_v,\tau_v)\big|$}
    \RETURN$\{v\}$
\ENDIF
\STATE let $u_1,u_2$ be the left and right children of $v$, respectively
\IF{$\Te\le\tau_{u_1}$}
    \RETURN$\textsc{Recall}(u_1,[\Ts,\Te))$
\ELSIF{$\Ts\ge\sigma_{u_2}$}
    \RETURN$\textsc{Recall}(u_2,[\Ts,\Te))$
\ELSE
    \RETURN$\textsc{Recall}(u_1,[\Ts,\Te))\cup\textsc{Recall}(u_2,[\Ts,\Te))$
\ENDIF
}

\Par{An optimal algorithm for pruning}
To solve the formulation above for hit set pruning, we propose an efficient recursive algorithm that runs in $O(\log T)$ time. The basic idea is as follows. We start from the root node. If the root node is a partial hit, then we stop and return the root node as the hit set. Otherwise, we consider its two children and repeat the procedure above. The overall procedure is presented in \ALGref{recall}. Since the height of the stream segment tree is $O(\log T)$, and \ALGref{recall} visits at most two nodes at each height, then the total running time of \ALGref{recall} is $O(\log T)$.

Furthermore, our Theorem~\ref{THM:prune-opt} shows that our \ALGref{recall} is indeed optimal --- it can find the smallest hit set that satisfies the constraints Eqs.~\eqref{eq:prune-c1} \& \eqref{eq:prune-c2}.

\begin{theorem}[Optimality \& complexity of \ALGref{recall}]\label{THM:prune-opt}
Given a range query $[\Ts,\Te)$, \ALGref{recall} minimizes the formulation in Eq.~\eqref{eq:prune-obj} within $O(\log T)$ running time and finds a hit set with $O(1)$ partial hits and $O(\log L)$ entire hits, where $L:=\Te-\Ts$. 
\end{theorem}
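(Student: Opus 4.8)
The plan is to establish the three assertions of Theorem~\ref{THM:prune-opt} separately: (i) \ALGref{recall} returns a feasible hit set of minimum size among all feasible sets, (ii) it runs in $O(\log T)$ time, and (iii) its output contains $O(1)$ partial hits and $O(\log L)$ entire hits. The backbone of every part is the same structural observation about how \ALGref{recall} traverses the tree: starting from the root it follows a single path as long as the query $[\Ts,\Te)$ is contained in one child (each such node is bypassed only because it is \emph{not} a hit), until it reaches the unique node $v^\ast$ whose two children are both met by the query (or it stops earlier at a single hit, the trivial case). At $v^\ast$ the recursion splits exactly once into a left call, which thereafter faces a \emph{suffix} query $[\Ts,\tau_{u_1})$, and a right call, which faces a \emph{prefix} query $[\sigma_{u_2},\Te)$; neither sub-call ever splits again, so each is a monotone descent that at every level either stops at a hit, descends into one child, or peels off one full sibling (an entire hit) and descends into the other.

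For feasibility I would argue by induction over the subtree that a call $\textsc{Recall}(v,\cdot)$ returns a set whose clipped ranges partition exactly $[\Ts,\Te)\cap[\sigma_v,\tau_v)$; the three branches of the algorithm correspond precisely to the three ways the query can meet $v$'s children, so constraint~\eqref{eq:prune-c1} holds, while the guard on line~1 is exactly constraint~\eqref{eq:prune-c2}. Leaves always satisfy the guard (a single time step is fully contained once it is met, with $1\ge\theta$), so the recursion terminates. The running time and the partial-hit count then fall out of the backbone: the pre-split path and each of the two monotone descents visit at most two nodes per level, and the height is $O(\log T)$ by Theorem~\ref{THM:height-T}, giving $O(\log T)$ total work; and since each monotone descent \emph{returns immediately} the first time it meets a partial hit, each descent contributes at most one partial hit, for $O(1)$ in total.

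The heart of the proof is optimality. Here I would exploit the laminar structure of segment-tree ranges (any two node ranges are nested or disjoint) together with the key fact that every $\CAL L\cup\CAL M$ node \ALGref{recall} bypasses fails the $\theta$-test and hence, by~\eqref{eq:prune-c2}, \emph{cannot belong to any feasible set}. In particular $v^\ast$ and all its ancestors are infeasible, so by laminarity every node of an arbitrary feasible set lies strictly inside the subtree of $u_1$ or of $u_2$; this decouples the problem into an independent suffix subproblem on $u_1$ and prefix subproblem on $u_2$, whose optima add. It then suffices to show the monotone recursion is optimal for, say, a suffix query, which I would prove by induction on the subtree: if the current node is a hit, the singleton is trivially optimal; otherwise the node is infeasible, so (again by laminarity and infeasibility of ancestors) any feasible set splits across the two children, the portion covering the fully contained sibling needs at least one node and $\{u_2\}$ attains it, and the remainder is an optimal suffix cover of the other child by the inductive hypothesis---matching exactly what \ALGref{recall} returns.

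Finally, for the $O(\log L)$ bound on entire hits I would reduce to the classical canonical decomposition of a prefix/suffix in a segment tree. After the single split, the suffix and prefix each have length at most $L$, and because the stream segment tree is weight-balanced (each node's range is split into two nearly equal halves by the construction of Section~\ref{ssec:op-append}), a suffix/prefix of length $\ell$ decomposes into only $O(\log\ell)=O(\log L)$ maximal canonical segments; the entire hits peeled by the monotone descent are a subset of these, and pruning can only replace a trailing run of them by a single partial hit. The step I expect to be the main obstacle is the optimality argument---specifically, justifying rigorously that a feasible set cannot ``straddle'' the split by using a high ancestor as a partial hit, which is exactly what the combination of laminarity and the infeasibility of $\theta$-failing bypassed nodes is designed to rule out; a secondary subtlety is ensuring the $O(\log L)$ segment count really follows from the weight-balance of the construction rather than from the $O(\log T)$ height bound alone.
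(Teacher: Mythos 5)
Your proposal is correct, and it rests on the same pivotal fact as the paper's proof---any node that \ALGref{recall} descends past has failed the $\theta$-test and therefore, by constraint~\eqref{eq:prune-c2}, can belong to no feasible hit set---but you deploy that fact differently. The paper's optimality argument is global: it shows (i) no node of a feasible set can be a proper ancestor of a node returned by the algorithm, and (ii) every node of an optimal set has an ancestor-or-self among the algorithm's output, and then concludes $|\CAL S_{\mathrm{opt}}|\ge|\CAL S^*|$ by a counting argument over this antichain structure. You instead localize the argument at the unique split node $v^*$: laminarity plus infeasibility of the bypassed ancestors forces every feasible set strictly into the subtrees of $v^*$'s two children, the objective decouples into independent suffix and prefix subproblems whose optima add, and a subtree induction handles each monotone descent. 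The two routes buy different things: yours makes the additivity explicit and avoids the slightly delicate surjectivity step that the paper's counting argument leaves implicit; the paper's version avoids setting up the recursion. On the running time, your per-level accounting is essentially the paper's recurrences $g_1,g_2,g_3$ in structural disguise. Where you genuinely go beyond the paper is the $O(\log L)$ bound on entire hits: the paper's appendix proof omits this claim altogether (the main text only gestures at ``an induction similar with Lemma~\ref{LEM:hit-height}''), and you correctly observe that the $O(\log T)$ height bound alone cannot yield it; your argument---peeled entire hits are contained in the query and, because the construction in Section~\ref{ssec:op-append} splits every range at its midpoint starting from a power-of-two root range, have pairwise distinct power-of-two sizes, hence at most $\log_2 L+O(1)$ per descent---supplies exactly the missing ingredient, with the descent terminals and the at-most-two partial hits contributing only $O(1)$ more.
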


\begin{proof}[\change{Proof sketch}]
\change{\emph{Optimality.} Note that if the two children of a node $v$ are both in the hit set, then replacing the two children with the node $v$ gives a smaller, valid hit set. Using this fact, it can be shown that every node in the optimal hit set should not have a parent node that is also a valid hit. Finally, by analyzing the top-down procedure of \ALGref{recall}, we can show that \ALGref{recall} can indeed find such a hit set, and that the hit set cannot be replaced with a smaller hit set.
}

\change{\emph{Complexity.} An induction similar with the proof of Lemma~\ref{LEM:hit-height} shows that the hit set found by \ALGref{recall} has $O(1)$ partial hits and $O(\log L)$ entire hits. Since the height of the stream segment tree is $O(\log T)$, the number of nodes traversed in the process of finding the hit set is at most $O(\log T+\log L)=O(\log T)$. Finally, as \ALGref{recall} performs $O(1)$ operations per node traversed, its time complexity is $O(\log T)$.
}
\end{proof}

\Par{Example} \ALGref{recall} is exemplified in Figure~\ref{fig:illust}. When answering a range query $[1,6)$ with $\theta=0.7$, $[0,4)$ is a partial hit because $|[1,6)\cap[0,4)|\ge 0.7|[0,4)|$, and $[4,6)$ is an entire hit because $[4,6)\subseteq[1,6)$. For the partial hit $[0,4)$, we use Eq.~\eqref{eq:partial-tucker} to approximate the Tucker decomposition of the sub-range $[1,6)\cap[0,4)=[1,4)$.

\subsection{Stitching Subtensor Decompositions}\label{ssec:core-stitch}
Another core algorithm of \Ours{} is stitching subtensor decompositions in the hit set. Given a range query $[\Ts,\Te)$, suppose that the (pruned) hit set is $\CAL S=\{v_1,\dots,v_s\}$, where $s:=|\CAL S|$ denotes the size of the hit set. For each partial hit, we use Eq.~\eqref{eq:partial-tucker} to compute its approximate Tucker decomposition $\TLD{\BS Y}_i$ of the subtensor $\BS X_{[\Ts,\Te)\cap[\sigma_{v_i},\tau_{v_i})}$; for each entire hit $v_i$, we retrieve its preprocessed Tucker decomposition $\TLD{\BS Y}_i:=\BS Y_{v_i}$. Same as before, here $\TLD{\BS Y}_1,\dots,\TLD{\BS Y}_s$ are just symbols to refer to the Tucker decomposition products. We aim to efficiently compute an approximate Tucker decomposition of $\BS X_{[\Ts,\Te)}$ using these preprocessed subtensor decompositions $\TLD{\BS Y}_1,\dots,\TLD{\BS Y}_s$. 

A key observation is that $\BS X_{[\Ts,\Te)\cap[\sigma_{v_i},\tau_{v_i})}\approx\TLD{\BS Y}_i$ due to the low-rank nature of real-world tensors \cite{liu2012tensor}. This motivates us to express $\BS X_{[\Ts,\Te)}$ as a concatenation of the hit set along the temporal mode:
\AL{\BS X_{[\Ts,\Te)}=\MAT{\BS X_{[\Ts,\Te)\cap[\sigma_{v_1},\tau_{v_1})}\\\vdots\\\BS X_{[\Ts,\Te)\cap[\sigma_{v_s},\tau_{v_s})}}\approx\MAT{\TLD{\BS Y}_1\\\vdots\\\TLD{\BS Y}_s}.}
Next, we design an efficient algorithm to compute the Tucker decomposition of $\BS X_{[\Ts,\Te)}$ by stitching the subtensor decompositions $\TLD{\BS Y}_1,\dots,\TLD{\BS Y}_s$. The key idea here is to leverage the concatenation form of $\TLD{\BS Y}:=\left[\begin{smallmatrix}\TLD{\BS Y}_1\\\vdots\\\TLD{\BS Y}_s\end{smallmatrix}\right]$ and again utilize the reverse associativity\textsuperscript{\ref{footnote:reverse_asso}}~\cite{kolda2009tensor} 
of the tensor--matrix product so as to efficiently compute the matricizations in Tucker-ALS.

Let $\TLD{\BS H}_i$ and $\TLD{\BM V}_i^\Pa1,\dots,\TLD{\BM V}_i^\Pa p$ denote the core tensor and the factor matrices in $\TLD{\BS Y}_i$, respectively, and let $\BS G$ and $\BM U^\Pa1,\dots,\BM U^\Pa p$ denote the core tensor and the factor matrices of $\BS X_{[\Ts,\Te)}$ to be computed, respectively. Since the optimal update of the factor matrix $\BM U^\Pa n$ is the $r_n$ leading left singular vectors of the matricization in Eq.~\eqref{eq:als-mat}, we need to compute this matricization efficiently. Since the concatenation is along the temporal mode, we will describe how to efficiently compute the matricization for the temporal mode and the non-temporal modes separately. The overall procedure of stitching subtensor decompositions is presented in \ALGref{stitch}.

\Par{Matricization of the temporal mode}
Our goal is to compute the matricization in Eq.~\eqref{eq:als-mat} without explicitly computing the large tensor $\TLD{\BS Y}$. First, we rewrite the matricizations of $\TLD{\BS Y}_i$ via the matrix Kronecker product $\otimes$:
\AL{\Mat1(\TLD{\BS Y}_i)&=\Mat1(\TLD{\BS H}_i\times_1\TLD{\BM V}_i^\Pa1\cdots\times_p\TLD{\BM V}_i^\Pa p)\\
&=\TLD{\BM V}_i^\Pa1\Mat1(\TLD{\BS H}_i)\bigotimes_{m=2}^p\TLD{\BM V}_i^{\Pa m\Tp}.}
Similarly, we can rewrite the matricization in Eq.~\eqref{eq:als-mat} as
\AL{\Mat 1(\TLD{\BS Y}\times_1\BM U^{\Pa2\Tp}\cdots\cdots\times_p\BM U^{\Pa p\Tp})=\Mat1(\TLD{\BS Y})\bigotimes_{m=2}^p\BM U^\Pa m.\label{eq:stitch-temp-1}}
Since the matricization of the concatenation $\TLD{\BS Y}$ is equal to the concatenation of $\Mat1(\TLD{\BS Y}_i)$'s, then by the mixed-product property\footnote{The mixed-product property means that $(\BM A\otimes\BM B)(\BM C\otimes\BM D)=(\BM A\BM C)\otimes(\BM B\BM D)$.} \cite{broxson2006kronecker} 
of the Kronecker product, Eq.~\eqref{eq:stitch-temp-1} can be further rewritten as:
\AL{&\left[\begin{smallmatrix}\TLD{\BM V}_1^\Pa1\Mat1(\TLD{\BS H}_1)(\bigotimes_{m=2}^p\TLD{\BM V}_1^{\Pa m\Tp})(\bigotimes_{m=2}^p\BM U^\Pa m)\\\vdots\\\TLD{\BM V}_s^\Pa1\Mat1(\TLD{\BS H}_s)(\bigotimes_{m=2}^p\TLD{\BM V}_s^{\Pa m\Tp})(\bigotimes_{m=2}^p\BM U^\Pa m)\end{smallmatrix}\right]\\
={}&\left[\begin{smallmatrix}\TLD{\BM V}_1^\Pa1\Mat1(\TLD{\BS H}_1)\bigotimes_{m=2}^p(\TLD{\BM V}_1^{\Pa m\Tp}\BM U^\Pa m)\\\vdots\\\TLD{\BM V}_s^\Pa1\Mat1(\TLD{\BS H}_s)\bigotimes_{m=2}^p(\TLD{\BM V}_s^{\Pa m\Tp}\BM U^\Pa m)\end{smallmatrix}\right]\\
={}&\left[\begin{smallmatrix}\Mat1(\TLD{\BS H}_1\times_1\TLD{\BM V}_1^\Pa1\times_2(\BM U^{\Pa2\Tp}\TLD{\BM V}_1^\Pa2)\cdots\times_p(\BM U^{\Pa p\Tp}\TLD{\BM V}_1^\Pa p))\\\vdots\\\Mat1(\TLD{\BS H}_s\times_1\TLD{\BM V}_s^\Pa1\times_2(\BM U^{\Pa2\Tp}\TLD{\BM V}_s^\Pa2)\cdots\times_p(\BM U^{\Pa p\Tp}\TLD{\BM V}_s^\Pa p))\end{smallmatrix}\right].\label{eq:stitch-temp-2}}
Computintg Eq.~\eqref{eq:stitch-temp-2} only involves small matrices for non-temporal modes and avoids explicitly computing the large tensor $\TLD{\BS Y}$ which requires $O(rD^{p-1}(T_e-T_s))$. 
\begin{lemma}[Time complexity of the temporal mode]\label{LEM:time_complexity_temporal}
Computing the matricization of the temporal mode in Eq.~\eqref{eq:stitch-temp-2} takes $O((r^2D+r^{p+1})s+r^{p}L)$ time where $L=\Te-\Ts$.
\end{lemma}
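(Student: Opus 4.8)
The plan is to bound the cost of each of the $s$ block rows in Eq.~\eqref{eq:stitch-temp-2} separately and then sum, crucially exploiting the \emph{order} in which the tensor--matrix products are applied so that the large temporal factor is multiplied in only at the very end. For the $i$-th hit, write $\ell_i:=\big|[\Ts,\Te)\cap[\sigma_{v_i},\tau_{v_i})\big|$ for the number of rows of $\TLD{\BM V}_i^\Pa1$, so that $\TLD{\BM V}_i^\Pa1\in\BB R^{\ell_i\times r_1}$, while the core $\TLD{\BS H}_i$ has $O(r^p)$ entries and each non-temporal factor $\TLD{\BM V}_i^\Pa m$ and $\BM U^\Pa m$ (for $m\ge2$) is of size $O(rD)$. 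I would split the computation of the $i$-th row into three stages. \emph{Stage 1 (projecting non-temporal factors):} compute $\BM U^{\Pa m\Tp}\TLD{\BM V}_i^\Pa m$ for each $m=2,\dots,p$; each is an $r_m\times D_m$ times $D_m\times r_m$ product costing $O(r^2D)$, and since $p=O(1)$ there are $O(1)$ of them, for $O(r^2D)$ in total. \emph{Stage 2 (shrinking the core):} multiply $\TLD{\BS H}_i$ by these $r_m\times r_m$ matrices along modes $2,\dots,p$; each mode-$m$ product of an $O(r^p)$-entry tensor with an $r_m\times r_m$ matrix costs $O(r^{p+1})$, and again there are only $O(1)$ of them, leaving an intermediate tensor $\BS C_i$ of size $r_1\times r_2\times\cdots\times r_p$ at total cost $O(r^{p+1})$. \emph{Stage 3 (applying the temporal factor):} since $\Mat1$ of the full product equals $\TLD{\BM V}_i^\Pa1\,\Mat1(\BS C_i)$, and $\Mat1(\BS C_i)$ has size $r_1\times O(r^{p-1})$, this last matrix product costs $O(\ell_i\,r^p)$.

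Summing the three stages, the $i$-th row costs $O(r^2D+r^{p+1}+\ell_i r^p)$. Adding over all $s$ hits gives $O\big(s(r^2D+r^{p+1})+r^p\sum_{i=1}^s\ell_i\big)$. The final step is to invoke the disjoint-union constraint Eq.~\eqref{eq:prune-c1}, which forces the ranges $[\Ts,\Te)\cap[\sigma_{v_i},\tau_{v_i})$ to partition $[\Ts,\Te)$, so that $\sum_{i=1}^s\ell_i=\Te-\Ts=L$; substituting this in yields the claimed $O((r^2D+r^{p+1})s+r^pL)$. Physically stacking the rows of Eq.~\eqref{eq:stitch-temp-2} adds only $O(Lr^{p-1})$, which is absorbed into the bound.

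The main obstacle is not the arithmetic but justifying the product ordering: one must argue that performing Stages 1 and 2 \emph{before} Stage 3 is exactly what keeps every intermediate object small (the core stays $O(r^p)$ and never blows up to the $O(rD^{p-1}L)$ cost of materializing $\TLD{\BS Y}$), and that the algebraic identity behind Eq.~\eqref{eq:stitch-temp-2} --- the mixed-product property together with the reverse associativity already established above --- makes this reordering exact rather than approximate. I would also flag the standing assumption $p=O(1)$, which is what lets the $p-1$ repetitions in Stages 1 and 2 be suppressed into the stated constants.
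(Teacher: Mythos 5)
Your proof is correct and follows essentially the same route as the paper's: both decompose Eq.~\eqref{eq:stitch-temp-2} into the same three computations (projecting the non-temporal factors $\BM U^{\Pa m\Tp}\TLD{\BM V}_i^\Pa m$ at $O(r^2Ds)$, contracting the cores at $O(r^{p+1}s)$, and applying the temporal factors last at $O(r^pL)$) and sum the costs. Your version is marginally more explicit in invoking the disjoint-union constraint Eq.~\eqref{eq:prune-c1} to justify $\sum_i\ell_i=L$, a step the paper leaves implicit.
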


\ALG{stitch}{(\textsc{Stitch}): Stitching subtensor decompositions}{
\REQUIRE{subtensor decompositions $\TLD{\BS Y}_i:=\TLD{\BS H}_i\times_1\TLD{\BM V}_i^\Pa1\cdots\times_p\TLD{\BM V}_i^\Pa p$ of the hit set $\{v_1,\dots,v_s\}$ 
}
\ENSURE{stitched Tucker decomposition of $\left[\begin{smallmatrix}\TLD{\BS Y}_1\\\vdots\\\TLD{\BS Y}_s\end{smallmatrix}\right]$}
\STATE randomly initialize $\BM U^\Pa2,\dots,\BM U^\Pa p$
\REPEAT 
    \STATE obtain $\BM Z^\Pa1$ using Eq.~\eqref{eq:stitch-temp-2} \label{alg:line:stitch_start}
    \STATE let $\BM U^\Pa1$ be the $r_1$ leading left singular vectors of $\BM Z^\Pa1$
    \STATE$t_0\gets0$
    \FOR{$i\gets1,\dots,s$}
        \STATE$t_i\gets t_{i-1}+(\tau_{v_i}-\sigma_{v_i})$
    \ENDFOR
    \FOR{$n\gets2,\dots,p$}
        \STATE obtain $\BM Z^\Pa n$ using Eq.~\eqref{eq:stitch-nontemp-3}
        \STATE let $\BM U^\Pa n$ be the $r_n$ leading left singular vectors of $\BM Z^\Pa n$
    \ENDFOR
    \STATE reshape $\BM Z^\Pa p$ into a tensor $\BS Z^\Pa p\in\BB R^{r_1\times\cdots\times r_{p-1}\times D_p}$
    \STATE$\BS G\gets\BS Z^\Pa p\times_p\BM U^{\Pa p\Tp}$  \label{alg:line:stitch_end}
\UNTIL{converged}
\RETURN$(\BS G,\BM U^\Pa1,\dots,\BM U^\Pa p)$
}

\begin{proof}[\change{Proof sketch}]
\change{Eq.~\eqref{eq:stitch-temp-2} consists of three computations whose costs are as follows:
for $n = 2,...,p$ and $i = 1,...,s$,
(1) matrix multiplications $\BM U^{\Pa n\Tp}\TLD{\BM V}_i^\Pa n$, (2) tensor-matrix products between $\TLD{\BS H}_i$ and the preceding results $\BM U^{\Pa n\Tp}\TLD{\BM V}_i^\Pa n$, and 
(3) tensor-matrix products between the preceding results and matrices $\TLD{\BM V}_i^\Pa 1$ take $O(r^2Ds)$, $O(r^{p+1}s)$, and $O(r^{p}L)$ time, respectively.
Therefore, the complexity for computing Eq.~\eqref{eq:stitch-temp-2} is $O((r^2D+r^{p+1})s +r^{p}L)$.}
\end{proof}


\Par{Matricization of the non-temporal modes}
For a non-temporal mode $n=2,\dots,p$, the matricization is different from Eq.~\eqref{eq:stitch-temp-2} because the concatenation is along the temporal mode. Nevertheless, we can still consider using the Kronecker product to rewrite the matricization in Eq.~\eqref{eq:als-mat} as:
\AL{\Mat n(\TLD{\BS Y})\bigotimes_{m\ne n}\BM U^\Pa m=[\Mat n(\TLD{\BS Y}_1),\ldots,\Mat n(\TLD{\BS Y}_s)]\bigotimes_{m\ne n}\BM U^\Pa m.\label{eq:stitch-nontemp-1}}
Let $t_0:=0$, and $t_i:=t_{i-1}+(\tau_{v_i}-\sigma_{v_i})$ for $i=1,\dots,s$. Then, each hit node $v_i$ corresponds to the subtensor $\TLD{\BS Y}_{[t_{i-1},t_i)}$. By splitting the temporal factor matrix as $\BM U^\Pa1=\left[\begin{smallmatrix}\BM U^\Pa1_{[t_0,t_1)}\\\vdots\\\BM U^\Pa1_{[t_{s-1},t_s)}\end{smallmatrix}\right]$ and using the mixed-product property of the Kronecker product again, we can further rewrite Eq.~\eqref{eq:stitch-nontemp-1} as
\AL{&\!\!\sum_{i=1}^s\Mat n(\TLD{\BS Y}_i)\bigg(\BM U^\Pa n_{[t_{i-1},t_i)}\otimes\bigotimes_{m\ne1,n}\BM U^\Pa m\bigg)\\
={}&\!\!\sum_{i=1}^s\TLD{\BM V}_i^\Pa n\Mat n(\TLD{\BS H}_i)\bigg(\bigotimes_{m\ne n}\TLD{\BM V}_i^{\Pa m\Tp}\bigg)\bigg(\BM U^\Pa1_{[t_{i-1},t_i)}\otimes\bigotimes_{m\ne1,n}\BM U^\Pa m\bigg)\\
={}&\!\!\sum_{i=1}^s\TLD{\BM V}_i^\Pa n\Mat n(\TLD{\BS H}_i)\bigg((\TLD{\BM V}_i^{\Pa1\Tp}\BM U^\Pa1_{[t_{i-1},t_i)})\otimes\bigotimes_{m\ne1,n}(\TLD{\BM V}_i^{\Pa m\Tp}\BM U^\Pa m)\bigg)
.\label{eq:stitch-nontemp-2}}
Finally, we rewrite the Kronecker product form in Eq.~\eqref{eq:stitch-nontemp-2} back to the matricization form:
\AL{\sum\limits_{i=1}^s&\Mat n(\TLD{\BS H}_i\times_1(\BM U_{[t_{i-1},t_i)}^{\Pa1\Tp}\TLD{\BM V}_i^\Pa1)\nonumber\\&\times_2(\BM U^{\Pa2\Tp}\TLD{\BM V}_i^\Pa2)\cdots\times_{n-1}(\BM U^{\Pa{n-1}\Tp}\TLD{\BM V}_i^\Pa{n-1})\times_n\TLD{\BM V}_i^\Pa n\nonumber\\&\times_{n+1}(\BM U^{\Pa{n+1}\Tp}\TLD{\BM V}_i^\Pa{n+1})\cdots\times_{p}(\BM U^{\Pa{p}\Tp}\TLD{\BM V}_i^\Pa{p})).\label{eq:stitch-nontemp-3}}
Computing Eq.~\eqref{eq:stitch-nontemp-3} only involves small matrices for non-temporal modes and avoids explicitly computing the large tensor $\TLD{\BS Y}$.

\begin{lemma}[Time complexity of the non-temporal modes]\label{LEM:time_complexity_nontemporal}
Computing the matricization of the non-temporal modes in Eq.~\eqref{eq:stitch-nontemp-3} takes $O(r^{p}Ds + r^{2}L)$ time where $L:=\Te-\Ts$.
\end{lemma}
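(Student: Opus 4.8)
The plan is to decompose the computation of Eq.~\eqref{eq:stitch-nontemp-3} into three stages and bound each stage separately, summing over the $s$ hit nodes and the $O(1)$ non-temporal modes. For a fixed mode $n$ and a fixed hit node $i$, I would first account for the small matrix multiplications that appear inside the tensor--matrix products, then for the chain of tensor--matrix products against $\TLD{\BS H}_i$, and finally for the matricization and the outer summation $\sum_{i=1}^s$. Throughout, I will invoke the standing assumptions $p=O(1)$ and $r=o(D)$ to collapse lower-order terms.

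First, fix $i$ and $n$ and record the shapes: $\TLD{\BS H}_i$ has $O(r^p)$ entries; the temporal block factor $\TLD{\BM V}_i^\Pa1$ is $\ell_i\times r_1$ with $\ell_i:=\tau_{v_i}-\sigma_{v_i}$ (so that $\sum_i\ell_i=L$); each non-temporal factor $\TLD{\BM V}_i^\Pa m$ is $D_m\times r_m$; and each $\BM U^\Pa m$ is $D_m\times r_m$. The small matrices entering Eq.~\eqref{eq:stitch-nontemp-3} are $\BM U_{[t_{i-1},t_i)}^{\Pa1\Tp}\TLD{\BM V}_i^\Pa1\in\BB R^{r_1\times r_1}$, costing $O(r^2\ell_i)$, together with $\BM U^{\Pa m\Tp}\TLD{\BM V}_i^\Pa m\in\BB R^{r_m\times r_m}$ for $m\ne1,n$, each costing $O(r^2D)$. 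Summing the former over all blocks yields the $O(r^2L)$ term, while summing the latter over the $O(1)$ modes and $s$ blocks yields $O(r^2Ds)$, which I will later absorb.

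Next, I would bound the chain of tensor--matrix products. Every factor except the mode-$n$ factor is an $r\times r$ matrix and hence keeps the tensor's dimensions at $r$, whereas the mode-$n$ product with $\TLD{\BM V}_i^\Pa n$ expands mode $n$ from $r_n$ to $D_n$. Performing the $p-1$ small ($r\times r$) products on a tensor of size $O(r^p)$ costs $O(r^{p+1})$ each, and the single expanding product costs $O(r^pD)$; since $r=o(D)$ gives $r^{p+1}=o(r^pD)$, the per-block-per-mode cost is $O(r^pD)$. The bound is robust to the order of the products: even expanding mode $n$ first, each subsequent $r\times r$ product acts on a tensor of size $O(r^{p-1}D)$ and still costs $O(r^pD)$. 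The closing matricization $\Mat n(\cdot)$ is merely a reshape of a tensor of size $O(r^{p-1}D)$, and the outer sum over $i$ adds $s$ matrices of this size; both are dominated by $O(r^pD)$ per term.

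Finally, I would combine the stages: summing the tensor--matrix-product cost over the $O(1)$ modes and $s$ blocks yields $O(r^pDs)$, and adding the small-matrix contributions $O(r^2Ds+r^2L)$ and absorbing $r^2Ds\le r^pDs$ (valid since $p\ge2$) gives the claimed $O(r^pDs+r^2L)$. I expect the main subtlety to be the bookkeeping that separates the two genuinely different terms rather than any single hard estimate: the $r^2L$ term arises solely from the temporal products $\BM U_{[t_{i-1},t_i)}^{\Pa1\Tp}\TLD{\BM V}_i^\Pa1$, whose row counts sum to $\sum_i\ell_i=L$, whereas every other contribution scales with the number of blocks $s$ and the non-temporal dimension $D$. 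The key point—which the reverse-associativity rewriting leading to Eq.~\eqref{eq:stitch-nontemp-3} secures—is that no term hides a factor of $L\cdot D$, as would be incurred by explicitly forming the large tensor $\TLD{\BS Y}$.
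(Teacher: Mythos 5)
Your proposal is correct and follows essentially the same approach as the paper's proof: both decompose Eq.~\eqref{eq:stitch-nontemp-3} into the temporal small-matrix products $\BM U_{[t_{i-1},t_i)}^{\Pa1\Tp}\TLD{\BM V}_i^\Pa1$ (giving $O(r^2L)$), the non-temporal products $\BM U^{\Pa m\Tp}\TLD{\BM V}_i^\Pa m$ (giving $O(r^2Ds)$), the small tensor--matrix products against $\TLD{\BS H}_i$ (giving $O(r^{p+1}s)$), and the expanding mode-$n$ product with $\TLD{\BM V}_i^\Pa n$ (giving $O(r^pDs)$), then absorb lower-order terms. Your added remarks on order-robustness of the product chain and on $\sum_i\ell_i=L$ are harmless refinements of the same argument.
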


\begin{proof}[\change{Proof sketch}]
\change{Eq.~\eqref{eq:stitch-nontemp-3} needs four computations:
for $m = 2,...,n-1,n+1,...,p$ and $i = 1,...,s$,
(1) matrix multiplications $\BM U_{[t_{i-1},t_i)}^{\Pa1\Tp}\TLD{\BM V}_i^\Pa1$,
(2) matrix multiplications $\BM U^{\Pa m\Tp}\TLD{\BM V}_i^\Pa m$, (3) tensor-matrix products between $\TLD{\BS H}_i$ and the current results $\BM U^{\Pa m\Tp}\TLD{\BM V}_i^\Pa m$, and 
(4) tensor-matrix products between the current results and matrices $\TLD{\BM V}_i^\Pa n$ take $O(r^{2}L)$, $O(r^2Ds)$, $O(r^{p+1}s)$, and $O(r^{p}Ds)$ time, respectively.
Hence, the total complexity for computing Eq.~\eqref{eq:stitch-nontemp-3} is $O(r^{p}Ds+ r^{2}L)$.}
\end{proof}

\Par{\change{Error analysis}}
\change{
We provide an error analysis of our \textsc{Stitch} algorithm in the following Proposition~\ref{PRP:err}.
}

\begin{proposition}[\change{Error bound}]\label{PRP:err}
\change{Let $\BS X$ be the concatenation of subtensors $\BS X^\Pa i$ ($i=1,\dots,s$), and let $\BS Y^\Pa i$ denote the rank-$r$ Tucker decomposition of $\BS X^\Pa i$. Suppose that alternating least squares are solved optimally, and that $\BS X$ is approximately low-rank (i.e., $\BS X=\BS W+\BS E$ where $\BS W$ has Tucker rank $\le r$, and $\frac{\|\BS E\|_\textnormal F}{\|\BS X\|_\textnormal F}\le\epsilon$ for small $\epsilon>0$). Then, the stitching algorithm finds a rank-$r$ Tucker decomposition $\BS Y$ of $\BS X$ with reconstruction error $\frac{\|\BS X-\BS Y\|_\textnormal F}{\|\BS X\|_\textnormal F}\le O(\epsilon)$.}
\end{proposition}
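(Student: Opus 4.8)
The plan is to bound the reconstruction error by a short chain of triangle inequalities, exploiting that, under the optimal-ALS assumption, both the per-block decompositions $\BS Y^\Pa i$ and the final stitched tensor $\BS Y$ are \emph{best} rank-$r$ Tucker approximations of their respective inputs; the $O(\epsilon)$ rate then falls out by comparing each optimal approximation against the single fixed rank-$\le r$ witness $\BS W$. First I would pass the near-low-rank structure down to the blocks: writing $\BS X=\BS W+\BS E$ and partitioning all three tensors along the temporal mode as $\BS X=[\BS X^\Pa1;\dots;\BS X^\Pa s]$, $\BS W=[\BS W^\Pa1;\dots;\BS W^\Pa s]$, and $\BS E=[\BS E^\Pa1;\dots;\BS E^\Pa s]$, each block obeys $\BS X^\Pa i=\BS W^\Pa i+\BS E^\Pa i$. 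The crucial remark is that restricting a Tucker-rank-$\le r$ tensor to a contiguous temporal block merely restricts its mode-$1$ factor matrix to a submatrix of rows, leaving the core and non-temporal factors intact, so $\BS W^\Pa i$ still has Tucker rank $\le r$. Since $\BS Y^\Pa i$ is an optimal rank-$r$ Tucker approximation of $\BS X^\Pa i$ while $\BS W^\Pa i$ is merely one rank-$\le r$ competitor, optimality gives $\|\BS X^\Pa i-\BS Y^\Pa i\|_\textnormal F\le\|\BS X^\Pa i-\BS W^\Pa i\|_\textnormal F=\|\BS E^\Pa i\|_\textnormal F$.

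Next I would aggregate across blocks and then absorb the stitch. Setting $\TLD{\BS Y}:=[\BS Y^\Pa1;\dots;\BS Y^\Pa s]$ (exactly the concatenation fed into \textsc{Stitch}), the Frobenius norm is additive over the temporal partition, so
\[
\|\BS X-\TLD{\BS Y}\|_\textnormal F^2=\sum_{i=1}^s\|\BS X^\Pa i-\BS Y^\Pa i\|_\textnormal F^2\le\sum_{i=1}^s\|\BS E^\Pa i\|_\textnormal F^2=\|\BS E\|_\textnormal F^2,
\]
i.e.\ $\|\BS X-\TLD{\BS Y}\|_\textnormal F\le\|\BS E\|_\textnormal F$. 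Because the output $\BS Y$ is (by optimal ALS) a best rank-$r$ Tucker approximation of $\TLD{\BS Y}$ and $\BS W$ is again a rank-$\le r$ competitor, $\|\TLD{\BS Y}-\BS Y\|_\textnormal F\le\|\TLD{\BS Y}-\BS W\|_\textnormal F\le\|\TLD{\BS Y}-\BS X\|_\textnormal F+\|\BS X-\BS W\|_\textnormal F\le2\|\BS E\|_\textnormal F$. A final triangle inequality then gives $\|\BS X-\BS Y\|_\textnormal F\le\|\BS X-\TLD{\BS Y}\|_\textnormal F+\|\TLD{\BS Y}-\BS Y\|_\textnormal F\le3\|\BS E\|_\textnormal F\le3\epsilon\|\BS X\|_\textnormal F$, which is precisely the claimed $O(\epsilon)$ bound (with explicit constant $3$).

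The hard part will not be the triangle-inequality bookkeeping but cleanly justifying the two optimality comparisons. I need that a best rank-$r$ Tucker approximation actually exists and that the optimal-ALS hypothesis returns it --- here the right fact is that the set of tensors of multilinear rank $\le r$ is closed (unlike the CP-rank case), so the minimizer is attained and the competitor inequalities are legitimate. I also need to verify carefully that temporal restriction cannot increase any multilinear rank, which is what guarantees $\BS W^\Pa i$ stays rank-$\le r$ and underlies the blockwise bound. Everything else reduces to the additivity of $\|\cdot\|_\textnormal F$ under temporal concatenation together with the three applications of the triangle inequality above.
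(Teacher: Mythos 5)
Your proposal is correct and follows essentially the same route as the paper's own proof: the blockwise optimality comparison $\|\BS X^\Pa i-\BS Y^\Pa i\|_\textnormal F\le\|\BS E^\Pa i\|_\textnormal F$ against the restricted witness $\BS W^\Pa i$, aggregation via Frobenius additivity to get $\|\BS X-\TLD{\BS Y}\|_\textnormal F\le\|\BS E\|_\textnormal F$, and the same triangle-inequality chain through $\BS W$ yielding the constant $3$. Your added justifications (temporal restriction preserves multilinear rank $\le r$; the rank-$\le r$ set is closed so best approximations exist) are details the paper leaves implicit, but they do not change the argument.
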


\begin{proof}[\change{Proof sketch}]
\change{Let $\BS W^\Pa i$ and $\BS E^\Pa i$ denote the part of $\BS W$ and $\BS E$ corresponding to the time range of $\BS X^\Pa i$, respectively. Then, $\|\BS X^\Pa i-\BS Y^\Pa i\|_\textnormal F\le\|\BS X^\Pa i-\BS W^\Pa i\|_\text F=\|\BS E^\Pa i\|_\textnormal F$ for all $i$. Let $\TLD{\BS Y}$ denote the concatenation of $\BS Y^\Pa i$. Thus,
\AL{
\|\BS X-\TLD{\BS Y}\|_\text F&
\le\sqrt{\sum_{i=1}^s\|\BS E^\Pa i\|_\textnormal F^2}=\|\BS E\|_\textnormal F\le\epsilon\|\BS X\|_\textnormal F
.}
Since $\BS Y$ is the Tucker decomposition of $\TLD{\BS Y}$, then
\AL{
\|\BS X-\BS Y\|_\text F\le{}&\|\BS X-\TLD{\BS Y}\|_\text F+\|\TLD{\BS Y}-\BS Y\|_\text F\le\epsilon\|\BS X\|_\textnormal F+\|\TLD{\BS Y}-\BS W\|_\text F\nonumber
\\\le{}&\epsilon\|\BS X\|_\textnormal F+\|\TLD{\BS Y}-\BS X\|_\text F+\|\BS X-\BS W\|_\text F\le
3\epsilon\|\BS X\|_\textnormal F
.\qedhere}
}
\end{proof}

\change{Proposition~\ref{PRP:err} implies that the reconstruction error of our \textsc{Stitch} algorithm is very close to the error of computing Tucker decomposition from scratch via TuckerALS and does not depend on the number $s$ of subtensors to be stitched.}


\section{\Ours{}: Operations}\label{sec:op}
Having described our design of the data structure in Section~\ref{sec:design} and two core algorithms in Section~\ref{sec:algo}, we next introduce how to answer Tucker decomposition range queries in Section~\ref{ssec:op-query} and how to maintain the 
tree after appending a tensor slice in Section~\ref{ssec:op-append}. 

\subsection{Querying over a Time Range}\label{ssec:op-query}
A query over time range $[\Ts,\Te)$ asks to find the Tucker decomposition of the subtensor $\BS X_{[\Ts,\Te)}$. Equipped with the two core algorithms in Section~\ref{sec:algo}, we are ready to present the algorithm for answering the range query. First, we use \ALGref{recall} w.r.t.\ the root of the stream segment tree to find an optimally pruned hit set $\CAL S\subseteq\CAL L\cup\CAL M$. For each partial hit, we use Eq.~\eqref{eq:partial-tucker} to compute its approximate Tucker decomposition $\TLD{\BS Y}_i$ of the subtensor $\BS X_{[\Ts,\Te)\cap[\sigma_{v_i},\tau_{v_i})}$; for each entire hit $v_i$, we retrieve its preprocessed Tucker decomposition $\TLD{\BS Y}_i:=\BS Y_{v_i}$. Same as before, here $\TLD{\BS Y}_1,\dots,\TLD{\BS Y}_s$ are just symbols to refer to the Tucker decomposition products. Finally, we use \ALGref{stitch} to stitch the subtensor decompositions $\TLD{\BS Y}_1,\dots,\TLD{\BS Y}_s$ into the Tucker decomposition $\BS G\times_1\BM U^\Pa1\cdots\times_p\BM U^\Pa p$ of the queried subtensor $\BS X_{[\Ts,\Te)}$. 

The overall procedure can be illustrated using the example in Figure~\ref{fig:illust}. When answering a range query $[\Ts,\Te)=[1,6)$ with $\theta=0.7$, first we use \ALGref{recall} to divide $[1,6)$ into two sub-ranges $[1,4)$ (a partial hit of $[0,4)$) and $[4,6)$ (an entire hit). Since $[1,4)$ is a partial hit of $[0,4)$, then we use Eq.~\eqref{eq:partial-tucker} to approximate the Tucker decomposition of $[1,4)$. Finally, we use \ALGref{stitch} to stitch the decompositions of $\BS X_{[1,4)}$ and $\BS X_{[4,6)}$ into an approximate Tucker decomposition of $[1,6)$.


\begin{proposition}[Time complexity]
\label{PROP:time_complexity_RQ}
Given a query $[\Ts, \Te)$, \Ours{} performs \textsc{Recall} and \textsc{Stitch} operations and takes $O(r^pDs+r^{2p-2}(D+L)+\log T)$ time overall, where the query length $L:=\Te-\Ts$, and the hit set size $s=O(\log L)$.
\end{proposition}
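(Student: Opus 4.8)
The plan is to prove Proposition~\ref{PROP:time_complexity_RQ} by decomposing the total running time of a range query into its three constituent algorithmic phases --- finding the pruned hit set via \textsc{Recall}, constructing the subtensor decompositions for the hit nodes, and stitching them via \textsc{Stitch} --- and then summing the costs established by the earlier results. First I would invoke Theorem~\ref{THM:prune-opt} to obtain that \textsc{Recall} runs in $O(\log T)$ time and returns a hit set of size $s=O(\log L)$ containing $O(1)$ partial hits and $O(\log L)$ entire hits. This immediately accounts for the $O(\log T)$ additive term.

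Next I would bound the cost of preparing the $\TLD{\BS Y}_i$ for each hit node. For the $O(1)$ partial hits, each approximate decomposition is computed via Eq.~\eqref{eq:partial-tucker}, which requires only a QR decomposition of a temporal factor submatrix together with a single mode-$1$ product; I would argue this is subsumed by the stitching cost up to constants. For the $O(\log L)$ entire hits, the preprocessed decompositions are merely retrieved, costing $O(1)$ per node. The dominant phase is \textsc{Stitch}, so I would account for it by charging the per-iteration cost. Each ALS iteration computes the temporal-mode matricization and the $p-1$ non-temporal-mode matricizations; by Lemma~\ref{LEM:time_complexity_temporal} the temporal mode costs $O((r^2D+r^{p+1})s+r^pL)$, and by Lemma~\ref{LEM:time_complexity_nontemporal} each non-temporal mode costs $O(r^pDs+r^2L)$. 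Summing over the $p-1$ non-temporal modes and using $p=O(1)$, the per-iteration cost is $O(r^pDs+r^{p+1}s+r^pL+r^2L)$, plus the cost of the $r_n$ leading left singular vectors of each $\BM Z^{(n)}$ and the final core-tensor product in \ALGref{stitch}.

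I would then simplify this expression using the standing assumptions $p=O(1)$ and $r=o(D)$. Under $r=o(D)$ we have $r^{p+1}s=r\cdot r^ps=o(r^pDs)$, so the $r^{p+1}s$ term is absorbed into $r^pDs$; similarly the SVD costs on the small non-temporal matricizations are dominated. Substituting $s=O(\log L)$ and noting that $r^pL$ and $r^2L$ are both $O(r^{2p-2}(D+L))$ for $p\ge2$ completes the reduction to the claimed bound $O(r^pDs+r^{2p-2}(D+L)+\log T)$, where the $r^{2p-2}L$ term captures the length-dependent stitching work and the additive $\log T$ comes from \textsc{Recall}. A remark I would include is that the number of ALS iterations is treated as $O(1)$ (a fixed constant, as is standard for ALS-based methods), so it does not enter the asymptotic bound.

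The main obstacle will be carefully justifying the absorption of the several $s$-dependent and $L$-dependent terms into the single target form, particularly reconciling the $r^pL$ term from Lemma~\ref{LEM:time_complexity_temporal} and the $r^2L$ term from Lemma~\ref{LEM:time_complexity_nontemporal} against the stated $r^{2p-2}(D+L)$ grouping, and confirming that the partial-hit QR preprocessing genuinely does not dominate. I would verify the exponent bookkeeping explicitly for the boundary case $p=2$, where $r^p=r^2=r^{2p-2}$, to ensure the grouping is tight and no term is silently dropped.
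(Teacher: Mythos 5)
Your overall skeleton --- decomposing the query into \textsc{Recall} plus the per-iteration work of \textsc{Stitch}, and invoking Theorem~\ref{THM:prune-opt}, Lemma~\ref{LEM:time_complexity_temporal}, and Lemma~\ref{LEM:time_complexity_nontemporal} --- is exactly the paper's proof, and your absorption of the $r^{p+1}s$, $r^pL$, and $r^2L$ terms is sound. However, there is a genuine gap at the SVD step. You assert that ``the SVD costs on the small non-temporal matricizations are dominated'' and never quantify the cost of extracting the $r_n$ leading left singular vectors of $\BM Z^\Pa n$. In fact these SVDs are not dominated by anything you have computed: they are precisely the origin of the $r^{2p-2}(D+L)$ term in the statement. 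The matrices $\BM Z^\Pa n$ ($n\ge2$) have size $D_n\times O(r^{p-1})$ and $\BM Z^\Pa1$ has size $L\times O(r^{p-1})$, so computing their truncated SVDs costs $O(\min(r^{p-1}D^2,\,r^{2p-2}D))$ and $O(\min(r^{p-1}L^2,\,r^{2p-2}L))$ respectively, which the paper bounds by $O(r^{2p-2}D)$ and $O(r^{2p-2}L)$ under the explicitly stated condition $D>r^{p-1}$ and $L>r^{p-1}$. For $p\ge3$, $r^{2p-2}D$ is \emph{not} $O(r^pDs)$ unless $r^{p-2}=O(s)=O(\log L)$, which you cannot assume. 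So once the false ``dominated'' claim is removed, your accounting establishes only $O(r^pDs+r^pL+\log T)$ for the work you actually bounded, while leaving the genuinely expensive step unbounded; the claimed $r^{2p-2}(D+L)$ appears in your write-up only as an upper envelope into which you absorb strictly \emph{smaller} matricization terms, not as a cost you derived.

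The fix is exactly what the paper does: add the SVD costs as a fourth explicitly quantified component ($p-1$ truncated SVDs at $O(r^{2p-2}D)$ each and one at $O(r^{2p-2}L)$, via the Gram-matrix or economy-SVD route), together with the core-tensor update at $O(r^pD)$, which you also leave unquantified but which is harmlessly within $O(r^pDs)$. Your treatment of the partial-hit QR preprocessing (cost $O(Lr^2+r^{p+1})$, which fits inside the bound) is a nice point of care that the paper itself omits, and your $p=2$ boundary check is consistent --- indeed for $p=2$ the SVD cost $r^2(D+L)$ happens to be absorbable, which may explain why the gap is easy to miss --- but for general $p$ the proof does not go through without the explicit SVD accounting.
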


\begin{proof}[\change{Proof sketch}]
\change{For each iteration, there are five computations: (1) the \textsc{Recall} algorithm, (2) the matricization of the temporal mode, (3) the matricization of $p-1=O(1)$ non-temporal modes, (4) Singular value decomposition $p$ times, and (5) the computation for updating core tensor.
Therefore, the overall time complexity is $O(r^{p}Ds + r^{2p-2}D + r^{2p-2}L+\log T)$. 
}
\end{proof}

\ALG{insert}{(\textsc{Insert}): Inserting a leaf node}{
\REQUIRE{current node $v$; current time $T$; tensor slice $\BS X_T$}
\IF{$\sigma_v=T$ and $\tau_v=T+1$}
    \STATE preprocess the Tucker decomposition $\BS Y_v$ of $\BS X_T$
    \STATE$\CAL P\gets\CAL P\setminus\{v\}$
    \STATE$\CAL L\gets\CAL L\cup\{v\}$
\ELSE
    \STATE$\mu\gets\big\lfloor\frac{\sigma_v+\tau_v}2\big\rfloor$
    \IF{$T<\mu$}
        \IF{$v$ does not have a left child $u_1$}
            \STATE create a left child $u_1\in\CAL P$ with $[\sigma_{u_1},\tau_{u_1})\gets[\sigma_v,\mu)$
        \ENDIF
        \STATE$\textsc{Insert}(u_1,T,\BS X_T)$
    \ELSE
        \IF{$v$ does not have a right child $u_2$}
            \STATE create a right child $u_2\in\CAL P$ with $[\sigma_{u_2},\tau_{u_2})\gets[\mu,\tau_v)$
        \ENDIF
        \STATE$\textsc{Insert}(u_2,T,\BS X_T)$
        \IF{$\tau_v=T+1$}
            \STATE$\BS Y_v\gets\textsc{Stitch}(\{\BS Y_{u_1},\BS Y_{u_2}\})$ via \ALGref{stitch}
            \STATE$\CAL P\gets\CAL P\setminus\{v\}$
            \STATE$\CAL M\gets\CAL M\cup\{v\}$
        \ENDIF
    \ENDIF
\ENDIF
}

\ALG{append}{(\textsc{Append}): Appending a tensor slice}{
\REQUIRE{current root node $r$; current time $T$; tensor slice $\BS X_T$}
\ENSURE{the root after appending}
\IF{$T=0$}
    \STATE create a node $r'\in\CAL P$ with $[\sigma_{r'},\tau_{r'})\gets[0,1)$
    \STATE$r\gets r'$
\ELSIF{$T=\tau_r$}
    \STATE create a node $r'\in\CAL P$ with $[\sigma_{r'},\tau_{r'})\gets[0,2\tau_r)$
    \STATE let $r$ be the left child of $r'$
    \STATE$r\gets r'$
\ENDIF
\STATE$\textsc{Insert}(r,T,\BS X_T)$ via \ALGref{insert}
\RETURN$r$
}

\subsection{Appending a Tensor Slice}\label{ssec:op-append}
Appending a tensor slice $\BS X_T$ extends the timespan from $[0,T)$ to $[0,T+1)$. To process this operation, we need to (i) maintain the stream segment tree structure and (ii) update the Tucker decomposition of nodes in the tree. Due to the special structure of the stream segment tree, we cannot maintain a logarithmic height via rotation operations like typical balanced search trees \cite{guibas1978dichromatic}. To address this issue, we leverage the fact that we have only the \emph{appending} operation but no arbitrary insertion operations. Our key idea here is to insert not only a leaf node but also possibly a root node so as to maintain the logarithmic height. In the following, we will first describe the case where we do not need to insert a root node and then discuss the case where we need to insert a root node.

If the root node $r$ is a placeholder node, then its time range $[0,\tau_r)$ includes $\BS X_T$. Thus, we can insert $\BS X_T$ into the tree. The insertion procedure is a recursive algorithm starting from the root $r$. Suppose that we are currently at a node $v$. Let $u_1,u_2$ denote the left and right children of $v$, respectively. If $T<\tau_{u_1}$, then we insert $\BS X_T$ into the subtree rooted at $u_1$. Otherwise, we need to insert $\BS X_T$ into the subtree rooted at $u_2$. If either $u_1$ or $u_2$ does not exist yet, we create that node before insertion. After insertion, we revise the type of the node $v$. If $T=\tau_v-1$, then the range $[\sigma_v,\tau_v)$ has been fully observed, so we let the node $v$ become an intermediate node. The overall insertion procedure is formally presented in \ALGref{insert}.

Meanwhile, if the root node $r$ is already an intermediate node, then its time range $[0,\tau_r)$ has been fully observed. In this case, we create a new placeholder node $r'$ with time range $[0,\tau_{r'}):=[0,2\tau_{r'})$, let node $r$ be the left child of $r$, and let $r'$ be the new root. Since the new root $r'$ is now a placeholder node, then we use \ALGref{insert} to insert $\BS X_T$ into the tree. The overall appending procedure is formally presented in \ALGref{append}.

The overall procedure can be exemplified using Figure~\ref{fig:illust}. Suppose that the current timespan is $[0,8)$ (i.e., the current root is the node $\langle8\rangle$), and that we want to append the slice $\BS X_8$. Since the root node $\langle8\rangle$ is an intermediate node, then we create a new root $\langle16\rangle$ as an intermediate node and let $\langle8\rangle$ be its left child. Then, we insert $\BS X_8$ into the new root. As $T=8$, we need to insert $\BS X_8$ into the right child of $\langle16\rangle$. Since $\langle16\rangle$ does not have a right child yet, we create a new intermediate node $\langle17\rangle$ as its right child and insert $\BS X_8$ into $\langle17\rangle$. We repeat this procedure until it reaches the leaf node $\langle20\rangle$. We preprocess the Tucker decomposition of $\BS X_{[8,9)}$ via Tucker-ALS and store it at node $\langle20\rangle$.

\begin{theorem}[Logarithmic height]\label{THM:alg-log-height}
If we use the algorithm above for timespan $[0,T)$, then it can construct a stream segment tree of height $\lceil\log_2T\rceil+1$.
\end{theorem}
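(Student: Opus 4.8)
The plan is to prove the \emph{exact} height by induction on the number of appended slices, maintaining two structural invariants about the tree that \ALGref{append} produces after building timespan $[0,T)$: (i) the root range is $[0,2^k)$ with $k:=\lceil\log_2 T\rceil$, and (ii) every node range $[\sigma_v,\tau_v)$ is a \emph{dyadic} interval, i.e.\ of the form $[j2^m,(j+1)2^m)$, and each internal split occurs at the exact midpoint. Once both invariants hold, the height follows immediately: in a dyadic tree rooted at an interval of length $2^k$, a node at depth $d$ (root at depth $1$) has range length $2^{k-d+1}$, so a leaf (a length-$1$ range) can only sit at depth $k+1$.

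For invariant (i), I would track $\tau_r$ (the right endpoint of the root) as a function of the number of appends. Inspecting \ALGref{append}, the root is replaced by a fresh root of range $[0,2\tau_r)$ exactly when the incoming time index equals $\tau_r$ (the current root is full), and is left unchanged otherwise. Starting from $\tau_r=1$ after the first append, a short induction shows that after $T$ appends $\tau_r$ equals the smallest power of two $\ge T$, namely $2^{\lceil\log_2 T\rceil}$: when $T$ is itself a power of two the doubling condition fires and advances $\tau_r$ to $2T$, while otherwise the current root already covers $[0,T)$ and no doubling is needed. Crucially, each doubling pushes every existing node down one level while simultaneously increasing $k$ by one, so the depth-versus-length relation used in (ii) is preserved across doublings.

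For invariant (ii), I would argue by structural induction over the recursion of \ALGref{insert}. The midpoint computed for a dyadic interval $[a,a+2^m)$ with $m\ge1$ is $\mu=\lfloor(a+a+2^m)/2\rfloor=a+2^{m-1}$, an integer, so the two children $[a,a+2^{m-1})$ and $[a+2^{m-1},a+2^m)$ are again dyadic of exactly half the length; since the root is dyadic by (i), every created node is dyadic and every split halves the length. Because \ALGref{insert} converts a node to a leaf only when its range has length $1$, every leaf is a length-$1$ dyadic interval.

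Combining the invariants yields the height. By the length-versus-depth relation, no node lies below depth $k+1$, giving height at most $k+1$. For the matching lower bound, the final append creates the leaf $[T-1,T)$: \ALGref{insert} descends from the root, creating any missing ancestors along the way, until it reaches this length-$1$ range, which sits at depth $k+1$. Hence the height is exactly $k+1=\lceil\log_2 T\rceil+1$. The main obstacle I anticipate is invariant (i): establishing that the doubling rule in \ALGref{append} produces precisely $\tau_r=2^{\lceil\log_2 T\rceil}$ (rather than merely some power of two $\ge T$) requires carefully separating the power-of-two and non-power-of-two cases in the induction and verifying that no doubling is ever skipped or triggered early.
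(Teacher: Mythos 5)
Your proof is correct, but it takes a genuinely different route from the paper's. The paper argues in two steps: (a) an induction on $T$ establishing a \emph{capacity} invariant --- for every node $v$, the number of leaves that can still be inserted below $v$ is $\max\{\tau_v-T,0\}$ --- so every insertion is valid and the root is doubled exactly when the current tree is full; and (b) a counting sandwich: a tree of height $h$ therefore holds more leaves than a perfect binary tree of height $h-1$ and at most as many as one of height $h$, i.e.\ $2^{h-2}<T\le2^{h-1}$, which inverts to $h=\lceil\log_2T\rceil+1$. You instead prove an explicit geometric characterization: the root range is exactly $[0,2^{k})$ with $k=\lceil\log_2T\rceil$, every node range is a dyadic interval split at its exact midpoint, and consequently every leaf lies at depth exactly $k+1$, with the leaf $[T-1,T)$ witnessing the lower bound. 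Your invariant (i) (doubling fires precisely when $T$ is a power of two) is the same fact the paper phrases as the tree ``becoming full,'' but the two proofs diverge after that: the paper's leaf-counting argument is shorter and never needs the midpoint/dyadic analysis, while your argument yields a strictly stronger conclusion --- the tree is a prefix of a perfect dyadic tree with all leaves at equal depth --- from which the height falls out immediately, and which also makes the validity of \ALGref{insert} nearly automatic (the child containing $T$ always exists or is created, with range length halving at each level). Both arguments are complete; yours trades brevity for a sharper structural picture of the tree.
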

\begin{proof}[\change{Proof sketch}]
\change{We can use an induction on $T$ to show that at time $T$, the number of new leaf nodes that can be inserted into the sub-tree rooted at each node $v$ is $\max\{\tau_v-T,0\}$. This implies that the number of leaf nodes in a stream segment tree of height $h$ is greater than the number of leaf nodes of a perfect binary tree of height $h-1$ and at most that of a perfect binary tree of height $h$. It follows that the height of a stream segment tree is $\lceil\log_2T\rceil+1$. 
}
\end{proof}

\begin{proposition}[Time complexity]
\label{PROP:time_complexity_appending}
When appending a tensor slice $\BS X_T$, the amortized and worst-case time complexities are $O(rD^{p-1} + r^{2p-2}(D +\log T))$ and $O(rD^{p-1} + r^{2p-2}(D\log T + T))$, respectively.
\end{proposition}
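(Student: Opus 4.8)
The plan is to break the cost of appending $\BS X_T$ via \ALGref{append} into four parts and bound each: (i) the possible creation of a new root, which is $O(1)$ pointer work; (ii) the top-down traversal of \ALGref{insert}, which touches $O(\log T)$ nodes and creates at most $O(\log T)$ placeholder children at $O(1)$ cost each; (iii) the one-time Tucker-ALS preprocessing of the new leaf $\BS X_{[T,T+1)}$, whose nontemporal modes have total size $O(D^{p-1})$, costing $O(rD^{p-1})$ as in the Tucker-ALS row of Table~\ref{tab:functionality} with temporal length $1$; and (iv) the \textsc{Stitch} calls performed on the way back up, which dominate and demand the amortized analysis below.

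First I would bound the cost of a single stitch. Every stitch during insertion merges exactly the two children of a completed node $v$, so in \ALGref{stitch} the hit-set size is $s=2$ and the temporal length is $L_v:=\tau_v-\sigma_v$. Combining Lemma~\ref{LEM:time_complexity_temporal}, Lemma~\ref{LEM:time_complexity_nontemporal}, and the SVD and core-update costs already accounted for in Proposition~\ref{PROP:time_complexity_RQ}, and using $p=O(1)$ together with $r=o(D)$ (so that $r^{p+1}=o(r^pD)$, $r^pD\le r^{2p-2}D$, and $r^pL_v\le r^{2p-2}L_v$ for $p\ge2$), the cost of stitching at $v$ collapses to $O(r^{2p-2}(D+L_v))$.

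Next I would identify which nodes are stitched during the $T$-th append. Because every node range has the form $[a\cdot2^j,(a+1)2^j)$, a node is completed (its last leaf just inserted, i.e.\ $\tau_v=T+1$) exactly when its length $2^j$ divides $T+1$; hence the completed internal nodes have lengths $2^1,2^2,\dots,2^{\nu}$, where $\nu:=\nu_2(T+1)$ is the $2$-adic valuation of $T+1$. Summing the per-stitch cost over these $\nu$ nodes gives $\sum_{j=1}^{\nu}O(r^{2p-2}(D+2^j))=O(r^{2p-2}(\nu D+2^{\nu+1}))$. For the worst case, $\nu\le\log_2(T+1)$ with equality when $T+1$ is a power of two, and then $2^{\nu+1}\le 2(T+1)$; combined with part (iii), this yields the worst-case bound $O(rD^{p-1}+r^{2p-2}(D\log T+T))$.

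Finally, for the amortized bound I would sum over a run of appends $T=0,\dots,N-1$. A node of length $2^j$ is completed $\lfloor N/2^j\rfloor$ times, so the total stitching cost is $\sum_{j\ge1}\lfloor N/2^j\rfloor\,O(r^{2p-2}(D+2^j))=O\big(r^{2p-2}(DN\sum_{j\ge1}2^{-j}+N\log N)\big)=O(r^{2p-2}(DN+N\log N))$ by the geometric series. Dividing by $N$ and adding the per-append leaf cost $O(rD^{p-1})$ gives the amortized bound $O(rD^{p-1}+r^{2p-2}(D+\log T))$. The main obstacle is this last step: one must correctly pin down the completed nodes via the $2$-adic-valuation (binary-counter) structure and observe that their lengths form a geometric progression, which is exactly what makes the telescoping sum collapse to a linear total; the per-stitch cost and the worst-case bound then follow comparatively directly from the earlier lemmas.
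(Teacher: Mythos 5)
Your proposal is correct and follows essentially the same route as the paper's proof: isolate the $O(rD^{p-1})$ Tucker-ALS cost of the new leaf, bound each stitch via the $s=2$ specialization of the query-cost analysis as $O(r^{2p-2}(D+L_v))$, observe that the stitched nodes per append number $O(\log T)$ with temporal lengths summing geometrically to $O(T)$ (worst case), and amortize by dividing the total tree-construction cost by the number of appends. Your identification of the completed nodes via the $2$-adic valuation of $T+1$ is simply a sharper, more explicit version of the paper's counting (the paper asserts $O(\log T)$ stitches per append and $O(T)$ stitches overall without pinning down exactly which nodes complete), so the two arguments coincide in substance.
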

\begin{proof}[\change{Proof sketch}]
\change{
We analyze the worst-case and amortized complexities of a stream segment tree.
The amortized complexity is the time complexity of creating a stream segment tree divided by $T$.
We preprocess $T$ tensor slices of leaf nodes with the complexity $O(rD^{p-1}T)$.
For intermediate nodes, we perform $O(T)$ stitch operations with the complexity $O(r^{2p-2}DT + r^{2p-2}T\log T)$.
Hence, the amortized complexity is $O(rD^{p-1} + r^{2p-2}D+r^{2p-2}\log T)$.
Appending a tensor slice $\BS X_T$ requires three computations: (1) performing Tucker-ALS of the tensor slice, (2) updating the non-temporal factor matrices, and (3) updating the temporal factor matrices.
Hence, the worst-case complexity of appending a tensor slice $\BS X_T$ is $O(rD^{p-1} + r^{2p-2}D\log T + r^{2p-2}T)$ which is the sum of the complexities of the three computations.
}
\end{proof}

\section{Implementation}



Since the bottleneck of Tucker decomposition is the tensor numerical operations, we speed up these computations using a graphical processing unit (GPU). Though originally designed to accelerate computer graphics and image processing, modern GPUs are powerful in parallelizing dense numerical operations in general scientific computing. To develop a prototype of our \Ours{} that is compatible across various platforms, we choose the PyTorch \cite{pytorch} CUDA \cite{cuda} library to set up and run GPU operations. Although Python execution is relatively slow compared with many other programming languages, it does not affect the overall performance much because the tensor operations are typically much more expensive than Python execution. 



\begin{table}[t!]
\caption{Summary of real-world tensor time series datasets. 
	}
\centering\label{tab:Description}
\resizebox{\linewidth}{!}{
\begin{tabular}{cccc}
\toprule
\textbf{Dataset}&\textbf{\#Entries} & \textbf{Shape} & \textbf{Modes} \\
\midrule	
Air Quality&47M &  $21000\times 376\times 6$&(time, location, air pollutant) \\
Traffic  &212M&  $2033\times 1084\times 96$&(time, frequency, sensor) \\
US Stock &739M&  $2000\times 4347\times 85$&(time, company, stock feature) \\
KR Stock &875M&  $3000\times 3432\times 85$&(time, company, stock feature) \\
\bottomrule
\end{tabular}}
\end{table}
\begin{figure*}[t]
\centering
\setcounter{subfigure}{0}
\subfloat[Air Quality]{\includegraphics[width=0.211\textwidth]{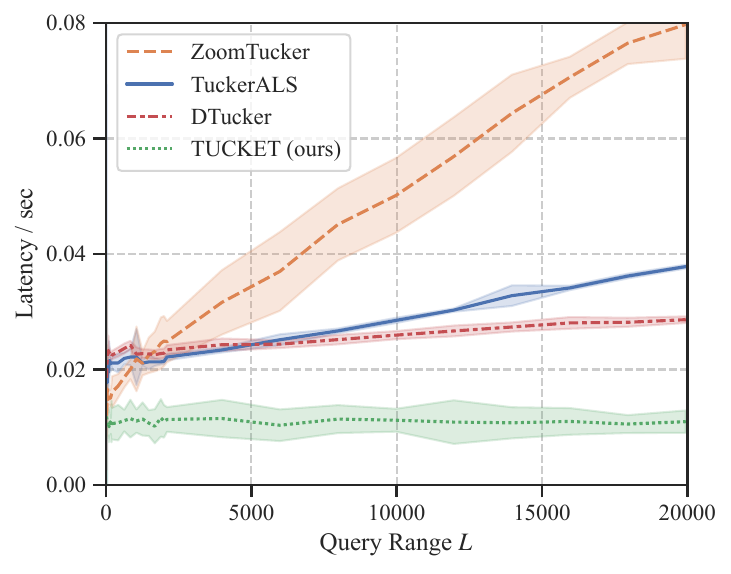}\label{fig:exp-lat-air}}
\subfloat[Traffic]{\includegraphics[width=0.2\textwidth]{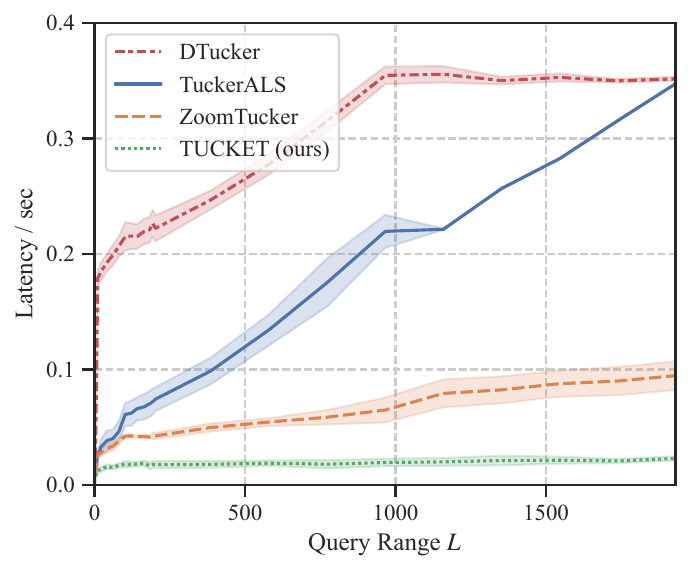}\label{fig:exp-lat-traffic}}
\subfloat[US Stock]{\includegraphics[width=0.2\textwidth]{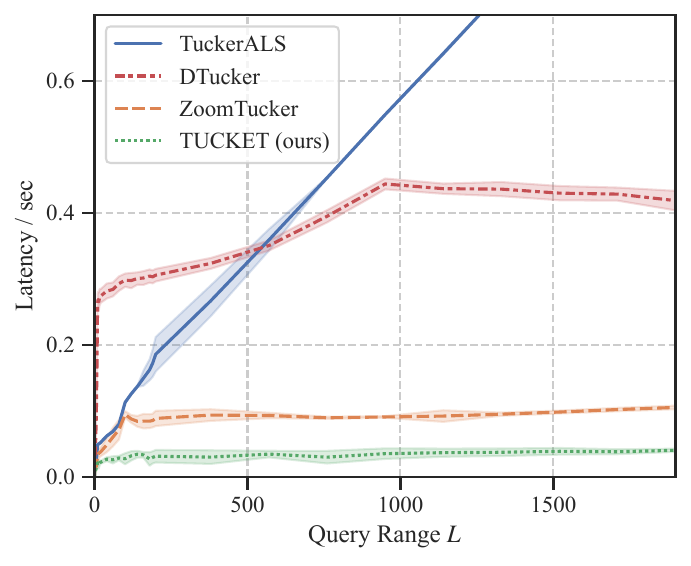}\label{fig:exp-lat-us}}
\subfloat[KR Stock]{\includegraphics[width=0.2\textwidth]{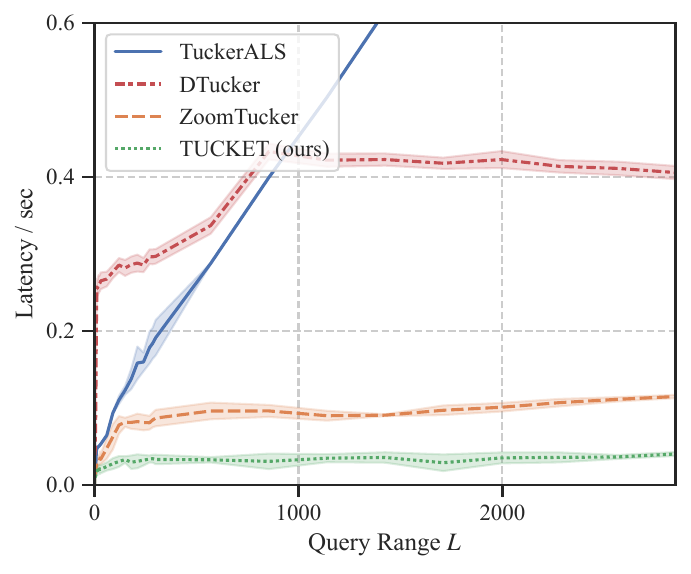}\label{fig:exp-lat-kr}}
\caption{
Comparison in the latency of range queries. 
Our \Ours{} (green dotted line) consistently achieves the lowest latency for all query ranges while the performance of other methods varies drastically across datasets.
}
\label{fig:exp-lat}
\end{figure*}




\begin{figure*}[t]
\centering
\captionsetup[subfigure]{justification=centering}
\setcounter{subfigure}{0}
\subfloat[\change{Latency v.s.\ $p$}]{\includegraphics[width=0.22\textwidth]{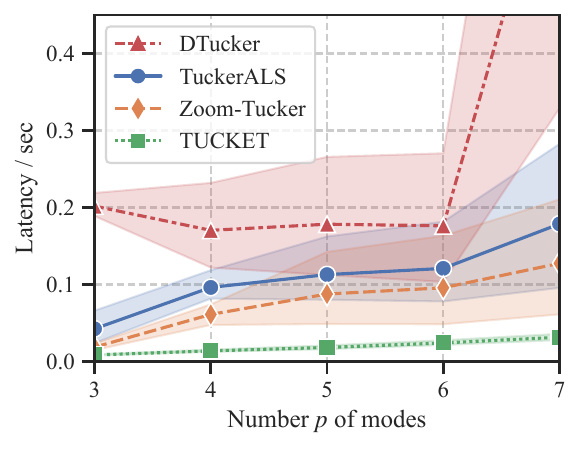}\label{fig:exp-l-p}}
\subfloat[\change{Latency v.s.\ $T$}]{\includegraphics[width=0.2425\textwidth]{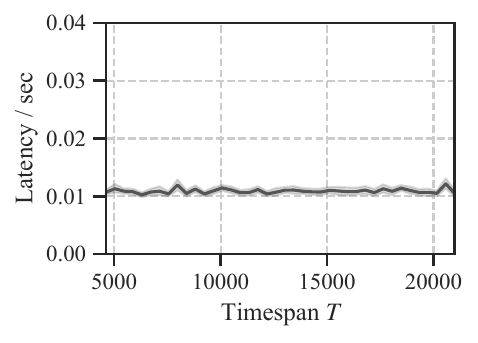}\label{fig:exp-lt}}
\subfloat[Latency of \textsc{Append}]{\includegraphics[width=0.2475\textwidth]{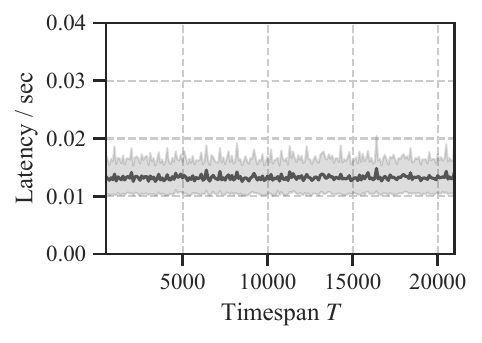}\label{fig:exp-cum-time}}
\subfloat[Cumulative space]{\includegraphics[width=0.24\textwidth]{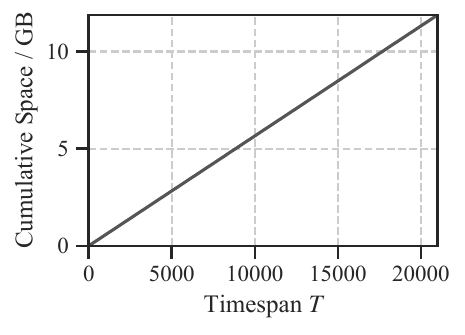}\label{fig:exp-cum-space}}
\caption{
Scalability tests
}
\label{fig:exp-cumulative}
\end{figure*}

\begin{figure*}[t]
\centering
\setcounter{subfigure}{0}
\subfloat[Air Quality]{\includegraphics[width=0.2\textwidth]{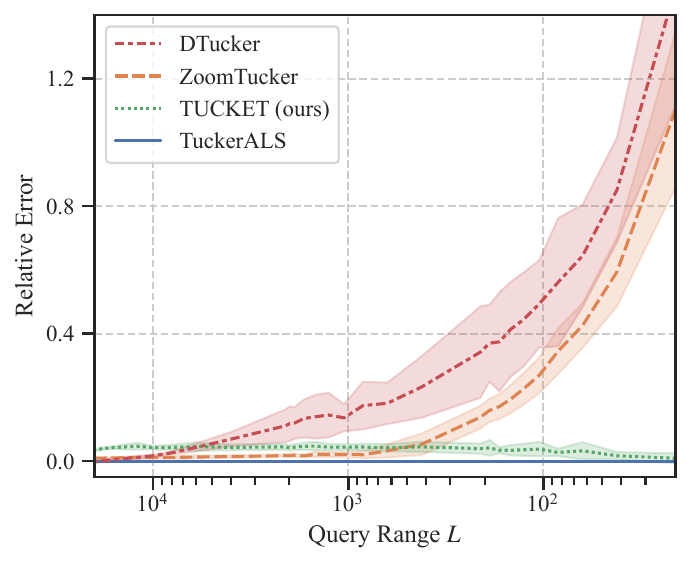}\label{fig:exp-err-air}}
\subfloat[Traffic]{\includegraphics[width=0.2\textwidth]{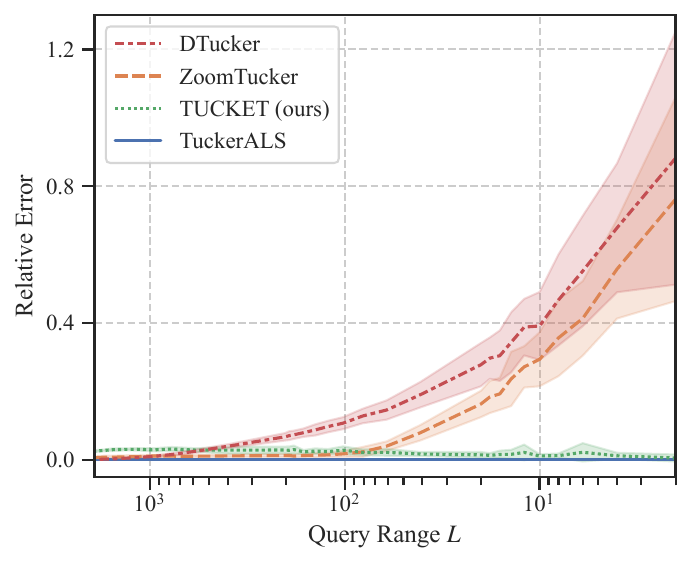}\label{fig:exp-err-traffic}}
\subfloat[US Stock]{\includegraphics[width=0.2\textwidth]{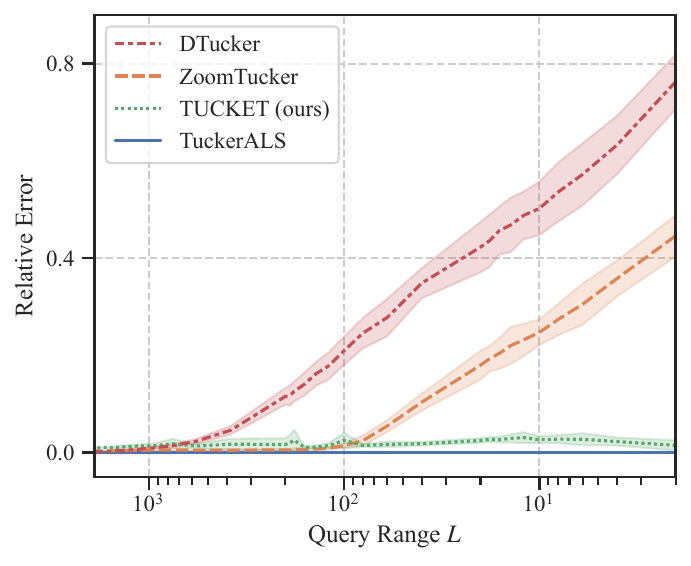}\label{fig:exp-err-us}}
\subfloat[KR Stock]{\includegraphics[width=0.2\textwidth]{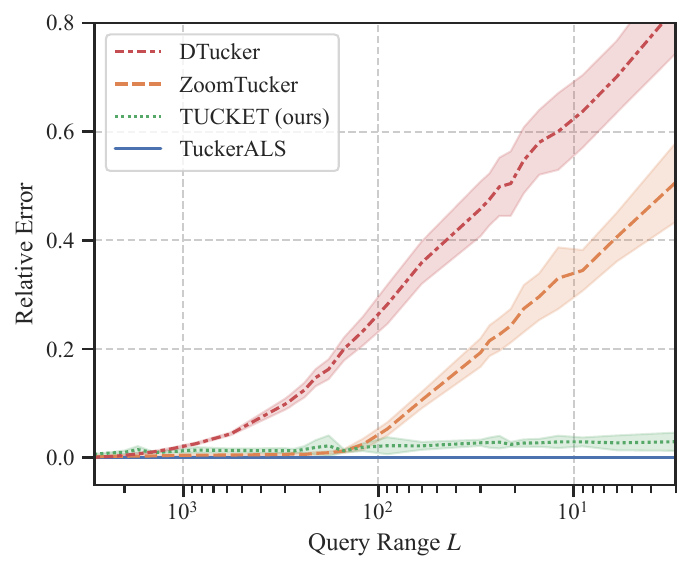}\label{fig:exp-err-kr}}
\caption{
Comparison in the relative error of range queries.
In contrast to D-Tucker and Zoom-Tucker, our \Ours{} (green dotted line) consistently achieves comparable 
error with that of Tucker-ALS for all query ranges.
}
\label{fig:exp-err}
\end{figure*}

\begin{figure}[t]
\captionsetup[subfigure]{justification=centering}
\centering
\setcounter{subfigure}{0}
\subfloat[\change{Cumulative latency}]{\includegraphics[width=0.43\linewidth]{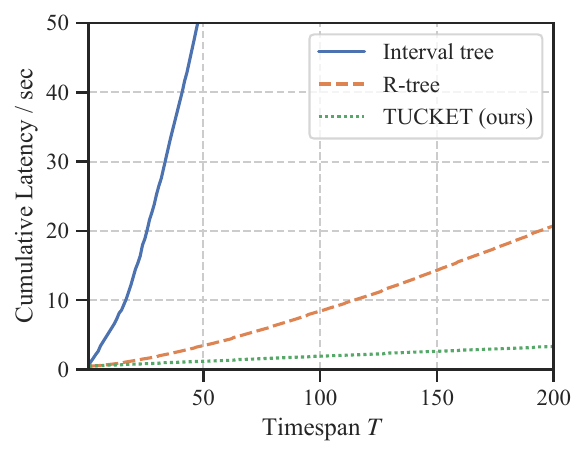}\label{fig:trees-dur}}
\subfloat[\change{Cumulative stitches}]{\includegraphics[width=0.45\linewidth]{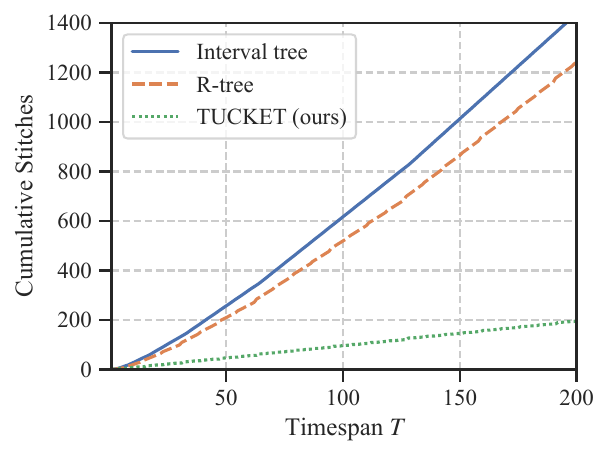}\label{fig:trees-stit}}
\caption{
\change{Comparison with the interval tree and the R-tree in terms of the \textsc{Append} operation.}
}
\label{fig:trees}
\end{figure}

\begin{figure}[t]
\captionsetup[subfigure]{justification=centering}
\centering
\setcounter{subfigure}{0}
\subfloat[Latency v.s.\ \\ stitching algorithm]{\includegraphics[width=0.5\linewidth]{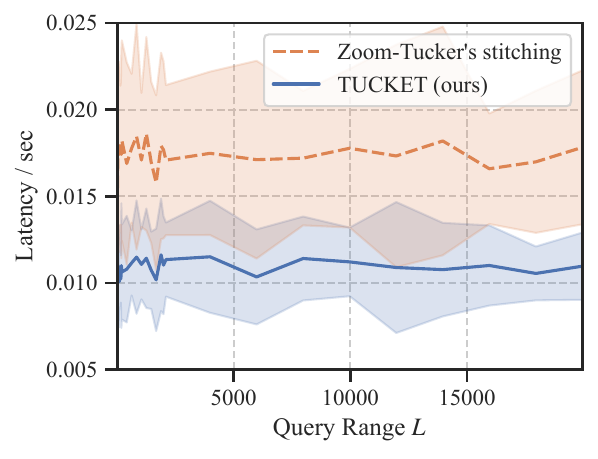}\label{fig:exp-stitch-latency}}
\subfloat[Relative error v.s.\ \\ stitching algorithm]{\includegraphics[width=0.48\linewidth]{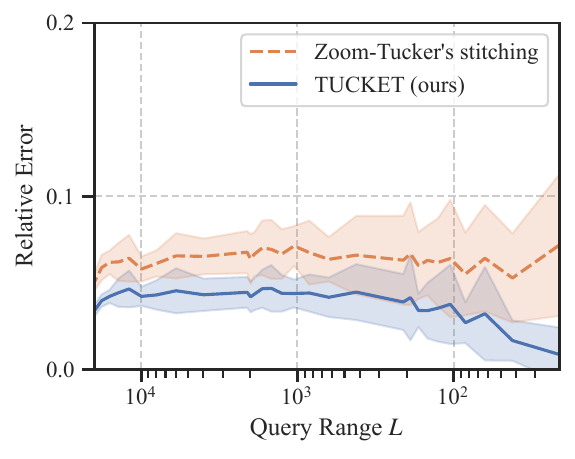}\label{fig:exp-stitch-error}}
\caption{
Comparison with Zoom-Tucker's stitching
}
\label{fig:exp-stitch}
\end{figure}

\begin{figure}[t]
\captionsetup[subfigure]{justification=centering}
\centering
\setcounter{subfigure}{0}
\subfloat[Latency v.s.\ \\ pruning threshold $\theta$]{\includegraphics[width=0.515\linewidth]{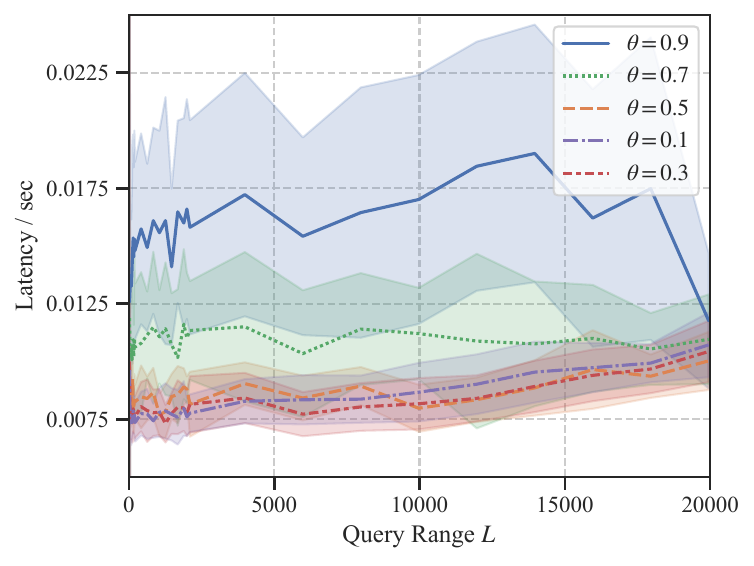}\label{fig:exp-prune-latency}}
\subfloat[Relative error v.s.\ \\ pruning threshold $\theta$]{\includegraphics[width=0.47\linewidth]{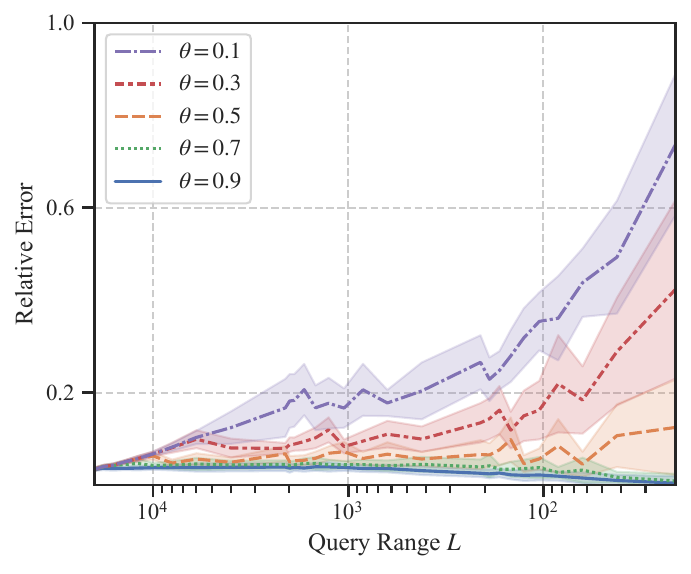}\label{fig:exp-prune-error}}
\caption{
Latency and relative error v.s.\ pruning threshold
}
\label{fig:exp-prune}
\end{figure}



\section{Experimental Evaluation}
In this section, we evaluate our \Ours{} by comparing it with state-of-the-art methods on four large-scale real-world tensor time series datasets. We summarize our evaluation results from our experiments as follows:
\begin{enumerate}[noitemsep,topsep=0pt]
\renewcommand{\labelenumi}{(\roman{enumi})}
\item \Ours{} consistently achieves the lowest latency over all query ranges on real-world tensor time series data \textbf{(Section~\ref{subsec:exp-efficiency})}.
\item We empirically demonstrate that \Ours{} constructs the whole tree in nearly linear time and consumes nearly linear space in total \textbf{(Section~\ref{subsec:exp-efficiency})}.
\item The reconstruction error of \Ours{} is much smaller than D-Tucker and Zoom-Tucker and is comparable with the brute-force method Tucker-ALS \textbf{(Section~\ref{subsec:exp-accuracy})}. 
\item Our new stitching algorithm is more GPU-parallelizable and more numerically stable than that of Zoom-Tucker \textbf{(Section~\ref{subsec:exp-sensitivity})}.
\item The pruning threshold $\theta=0.7$ can achieve both high efficiency and accuracy in our experiments 
\textbf{(Section~\ref{subsec:exp-sensitivity})}.
\end{enumerate}


\subsection{Experimental Settings}\label{ssec:experiment-setting}


\Par{Datasets}
We use four large-scale real-world tensor time series datasets, which are summarized in Table~\ref{tab:Description}.
\textbf{(D1)} Air Quality data is represented as $3$-way tensor time series (time, location, air pollutant). It is collected from the Air Korea\footnote{\url{https://www.airkorea.or.kr/web/}} website.
\textbf{(D2)} Traffic data\footnote{\url{https://github.com/florinsch/BigTrafficData}}~\cite{schimbinschi2015traffic} is $3$-way tensor time series (time, frequency, sensor) representing a collection of traffic volume measurements around Melbourne. 
\textbf{(D3 \& D4)} We use daily stock features (e.g., prices, volumes, and technical indicators) on the U.S. and Korean stock markets, respectively, to build $3$-way tensor time series (time, company, stock feature).
\textbf{(D5)} To evaluate the scalability w.r.t.\ the number $p$ of modes, we generate synthetic tensors with the following sizes: (1) $1000\times 1000\times 1000$, (2) $1000\times 100 \times 100 \times 100$, (3) $1000\times 100 \times 100 \times 10 \times 10$, (4) $1000\times 100 \times 10 \times 10 \times 10 \times 10$, (5) $1000\times 10 \times 10 \times 10 \times 10 \times 10 \times 10$.

\Par{Evaluation metrics} 
Regarding efficiency, since no baseline methods support batch operations, we do not report the throughput. Instead, we report the \emph{latency} (in seconds) of each operation.

Regarding accuracy, for each range query $[\Ts,\Te)$, we report the \emph{relative error} between each method $A$ and Tucker-ALS:
\begin{align}
    \frac{\|{\BS X}_{[\Ts,\Te)} - {\BS Y}_{A} \|_\text F}{\|{\BS X}_{[\Ts,\Te)} - {\BS Y}_\text{Tucker-ALS} \|_\text F} - 1,
\end{align}
where ${\BS Y}_\text{Tucker-ALS}$ and ${\BS Y}_{A}$ denote the reconstructed tensors by Tucker-ALS and the method $A$, respectively. 

\Par{Platform} 
We conduct all experiments in the Docker environment on an Ubuntu 20.04.6 LTS cloud server with an Intel Xeon CPU @ 2.00 GHz and an NVIDIA P100 16GB GPU.

\Par{Evaluation framework}
When comparing the performance of \Ours{} with baselines, since no baseline methods support range queries and stream updates simultaneously, we first construct the data structures of all methods and then run range queries. 
The appending operation is evaluated separately in Section~\ref{subsec:exp-efficiency}.

\Par{Hyperparameters}
For alternating least squares, we set the maximum number of iterations to 20 and the tolerance to 0.01. 
We set a target size to $10$, except that we set the target size to 5 when the size of a non-temporal mode is smaller than $10$.
For \Ours{}, we use the pruning threshold $\theta=0.7$ for all datasets.

\subsection{Baseline Methods}
To evaluate the effectiveness of our proposed \Ours{}, we compare it with state-of-the-art methods for Tucker decomposition.
The baselines are described below. 

\begin{itemize}[noitemsep,topsep=0pt]
\item \textbf{Tucker-ALS} \cite{tuckerals} utilizes alternating least squares optimization to compute Tucker decomposition, thus achieving the best reconstruction error. However, in our setting, 
Tucker-ALS is essentially 
a brute-force algorithm, unable to support range queries or streaming updates efficiently.
\item \textbf{D-Tucker} \cite{dtucker} decomposes compressed matrices sliced from the input tensor and further updates the factors and cores iteratively, which enables a fast and memory-efficient decomposition of large and dense tensors. Notably, the iteration phase of D-Tucker facilitates its seamless adaptation to tasks involving stream updates. However, it does not support 
range queries. For a fair comparison, we extract the sub-tensor corresponding to the range query from the preprocessed slices of D-Tucker.
\item \textbf{Zoom-Tucker} \cite{zoomtucker} supports Tucker decomposition range queries via block-wise preprocessing and by merging block results during the query pahse. 
While Zoom-Tucker demonstrates efficient performance on range queries, it faces limitations in supporting stream updates. For a fair comparison, we use block size $\frac T{2\lceil\log_2T\rceil}$ in Zoom-Tucker, which ensures that the maximum number of blocks is comparable with the maximum size of the hit set of \Ours{}. Besides that, since we have enhanced the stitching algorithm of subtensor decompositions in Section~\ref{ssec:core-stitch}, we also use our stitching algorithm in Zoom-Tucker for a fair comparison.
\end{itemize}


\subsection{Efficiency \& Scalability Tests}
\label{subsec:exp-efficiency}
In this subsection, we evaluate the time and space efficiencies of \Ours{} in range query answering and appending tensor slices.

\Par{Efficiency of range query answering w.r.t.\ query length $L$}
As shown in Figure~\ref{fig:exp-lat}, \Ours{} consistently achieves the lowest latency compared to all baseline methods, with its latency remaining almost stable regardless of the query range. In contrast, the latency of baselines increases dramatically as the query range expands.
On the Air Quality dataset, \Ours{} exhibits an average latency of 0.011 seconds on a GPU, significantly smaller than that of all other baselines. In Traffic, US Stock, and KR Stock datasets, although Zoom-Tucker and \Ours{} demonstrate similar trends, our \Ours{} outperforms Zoom-Tucker by a considerable margin. 

\Par{\change{Scalability w.r.t.\ number $p$ of modes}}
\change{We measure latencies of \Ours{} and baselines by varying the number $p$ of modes on synthetic tensor time series under query lengths $98$, $192$, and $384$ and 
target rank $r=5$.
As shown in Figure~\ref{fig:exp-l-p}, \Ours{} is still the most efficient method under a higher number $p$ of modes. \Ours{} consistently achieves the lowest latency compared to all baselines across all query lengths and number of modes. \Ours{} achieves $5.9$ times lower latency than the second-fastest method, Zoom-Tucker, when the number of modes $p$ is $7$ and the query length is $384$. This highlights the superiority of our recall and stitching algorithms 
in terms of the scalability w.r.t.\ the number $p$ of modes.}

\Par{\change{Efficiency of query answering w.r.t.\ timespan $T$}} \change{We keep the query length $L$ the same while appending new slices between queries to increase $T$. The results of latency v.s.\ timespan $T$ on Air Quality are shown in Figure~\ref{fig:exp-lt}. We can see that the latency of range queries almost has no notable change. This validates our time complexity where the dominant term $O(r^pD\log L)$ scales with only $L$ and does not explicitly depend on $T$.
}

\Par{Efficiency of appending tensor slices}
We plot the latency of \textsc{Append} on the Air Quality dataset v.s.\ the timespan $T$ in Figure~\ref{fig:exp-cum-time}. The results validate the amortized time complexity $O(rD^{p-1}+r^{2p-2}(D+\log T))$ of \textsc{Append}. Notably, Figure~\ref{fig:exp-cum-time} shows that the time complexity is nearly constant w.r.t.\ $T$. This is because $D$ is typically much greater than $\log T$ as long as $T$ is not too large. 

\Par{Space consumption of our \Ours{}}
We plot the cumulative space usage on the Air Quality dataset v.s.\ the timespan $T$ in Figure~\ref{fig:exp-cum-space}. The results validate the space complexity $O((rD+r^p)T+rT\log T)$ of our \Ours{}. Notably, Figure~\ref{fig:exp-cum-space} shows that the space complexity is nearly linear w.r.t.\ $T$. This is because $D$ is typically much greater than $\log T$ as long as $T$ is not too large. 

\subsection{Accuracy of Range Query Answering}
\label{subsec:exp-accuracy}
We measure relative errors with respect to time range queries.
Figure~\ref{fig:exp-err} shows the results.
\Ours{} consistently has comparable errors to Tucker-ALS which performs Tucker decomposition from scratch.
As $T_e - T_s$ decreases, \Ours{} and Tucker-ALS have little variation in errors, while the errors of D-Tucker and Zoom-Tucker increase dramatically.
This is because \Ours{} effectively preserves information for short time ranges using the proposed stream segment tree in the preprocessing phase, whereas D-Tucker and Zoom-Tucker compromise the accuracy for short time ranges in the preprocessing phase.

\subsection{Ablation Studies}
\label{subsec:exp-sensitivity}
To further understand 
\Ours{}, we conduct the following ablation studies: (i) comparing our stream segment tree with other data structures, (ii) comparing our new stitching algorithm with that of Zoom-Tucker, and (iii) analyzing the effect of pruning threshold $\theta$. 

\Par{\change{Comparison of data structures}} \change{
We compare our stream segment tree with the interval tree \cite{preparata2012computational} (using AVL balancing \cite{adel1962algorithm}) and the (1-dimensional) R-tree \cite{guttman1984r} (using B-tree balancing \cite{bayer1970organization}) in terms of the \textsc{Append} operation. 
We report in Figure~\ref{fig:trees} the cumulative latency and the cumulative number of \textsc{Stitch} operations in \textsc{Append} because the \textsc{Stitch} operation is the bottleneck during \text{Append}. We see that our \Ours{} achieves 83.5 times and 3.4 times speedup over the interval tree and the R-tree, respectively; they need $O(\log T)$ \textsc{Stitch} operations per \text{Append} because every node on the path from the inserted node to the root needs a \textsc{Stitch}. In contrast, our proposed \emph{stream segment tree} needs only 1 \textsc{Stitch} (amortized) per \text{Append} because our placeholder nodes need no \textsc{Stitch}. 
}

\Par{Comparison of stitching algorithms}
We compare our stitching algorithm with Zoom-Tucker's stitching algorithm in terms of latency and relative error by replacing our \textsc{Stitch} (\ALGref{stitch}) with Zoom-Tucker's stitching algorithm. 
In Figure~\ref{fig:exp-stitch-latency}, our stitching algorithm achieves lower latency than the stitching algorithm of Zoom-Tucker.
This result implies that our stitching algorithm is more GPU-parallelizable than Zoom-Tucker's stitching.
Figure~\ref{fig:exp-stitch-error} shows that our stitching algorithm has lower relative errors than Zoom-Tucker's stitching. 
The error gap widens 
as the query range decreases since Zoom-Tucker's stitching needs to compute the inverse for rank-deficient matrices in short query ranges.

\Par{Effect of the pruning threshold $\theta$}
We test the model efficiency and the relative error of \Ours{} with respect to the pruning threshold. We conduct the experiment on the Air Quality dataset.
Figure~\ref{fig:exp-prune-error} shows the relative error of \Ours{} with respect to the query range. As we can see, as the $\theta$ value increases, the relative error consistently decreases for all query ranges. Meanwhile, the difference between $\theta=0.7$ and $\theta=0.9$ is tiny. 
When comparing $\theta=0.7$ and $\theta=0.9$, $\theta=0.7$ is better 
since it has lower latency than $\theta=0.9$ over all the query ranges as shown in Figure~\ref{fig:exp-prune-latency}.
Therefore, $\theta=0.7$ is the best choice for balancing efficiency and accuracy as it allows \Ours{} to avoid pruning-induced accuracy loss and handle a similar number of blocks.

\subsection{\change{Case Study}}
\label{subsec:exp-case-study}
\change{
To demonstrate the application of \Ours{}, we present a case study with Air Quality data. 
We run \Ours{} for three time ranges (March 2015, March 2016, and March 2017) to obtain the location factor matrices $\BM U^\Pa2 \in \mathbb{R}^{376\times r}$ of each time range. Here, the $i$-th row vector of $\BM U^\Pa2$ represents the air pollution patterns of the $i$-th location. Then, we perform K-means clustering w.r.t.\ the row vectors of $\BM U^\Pa2$ to find 5 clusters of the locations. }

\change{Clustering results are shown in Figure~\ref{fig:exp-case-air}. We can see that 
some regions consistently exhibit similar clustering patterns across all years while some other regions have varying clustering patterns depending on the year.
Regions \text{(A)} and \text{(D)} had similar patterns in March 2015 and 2016, but their patterns changed 
in March 2017.
Meanwhile, regions \text{(B)}, \text{(C)}, and \text{(E)} had similar clustering patterns across all years.
Region \text{(F)} has slightly different clustering patterns across the years.
Air pollution patterns can be attributed to changes in weather conditions, the occurrence of yellow dust and fine dust, industrial activity, and traffic volume.
\Ours{} enables researchers and practitioners to explore diverse time ranges on Air Quality data efficiently and accurately. 
}
\section{Related Work}

\Par{Tensor decomposition}
Tensor decompositions 
have been widely used for analyzing real-world tensors.
Due to their large sizes, developing efficient and scalable CP methods~\cite{li20162pcp,huang2016bicp,cheng2016spals,battaglino2018practical,ballard2018parallel,zhang2023scalable} and Tucker methods~\cite{tsourakakis2010mach,dtucker,tuckerts,ma2021fast,jang2023static} have attracted considerable interest.
The vast majority of these works focus on decomposing the entire tensor from scratch and thus cannot efficiently answer range queries. 
Although Zoom-Tucker~\cite{zoomtucker} supports efficient range queries, it has an unwilling tradeoff between accuracy and efficiency due to a fixed block size. 
In addition to the above methods for dense tensors, there are plenty of tensor decomposition methods for sparse tensors where only a few entries are nonzeros.
Many works~\cite{papalexakis2012parcube,kaya2015scalable,kaya2016high,yang2017lftf,smith2017accelerating,oh2017s,oh2018scalable,oh2019high,li2020sgd,shi2023parallel} have developed scalable tensor decomposition for sparse tensors in parallel and distributed systems.
Numerous tensor decomposition methods~\cite{liu2019costco,wu2019neural,chen2020neural,tillinghast2020probabilistic,fan2021multi} utilize neural networks for predicting unobserved entries of sparse tensors.
However, they do not support range queries either. 


\Par{Time series databases}
Time series databases utilize various data structures to handle time-series data.
Many time series databases, including KairosDB~\cite{kairosdb}, Apache IoTDB~\cite{wang2023apache}, and LittleTable~\cite{rhea2017littletable}, are designed based on log-structured merge tree (LSM-tree)~\cite{o1996log} for managing time-series data.
Other time series databases, including InfluxDB~\cite{influxdata}, BTrDB~\cite{andersen2016btrdb}, and EdgeDB~\cite{yang2019edgedb}, utilize their own tree structures to manage massive time series data. 
However, none of these works considers tensor time series or supports tensor decomposition range queries, which is harder 
than the simple queries supported by existing time series databases.

\section{Conclusion \& Future Work}
In this paper, we have proposed a tensor time series data structure called \Ours{} that can efficiently and accurately support range queries of Tucker decomposition and stream updates to the tensor time series. To the best of our knowledge, our \Ours{} is a first-of-its-kind method that creatively generalizes the segment tree to the Tucker decomposition range query problem with stream updates. We provide (i) fine-grained theoretical guarantees and (ii) time and space complexities for our proposed method. We also experimentally show that \Ours{} consistently achieves the best efficiency and accuracy in time range query answering.

Future work includes extending \Ours{} to sparse tensors and to other tensor decompositions such as CANDECOMP/PARAFAC and PARAFAC2 decompositions, 
and supporting multi-mode range queries via nested segment trees \cite{vaishnavi1982computing}. 

\begin{acks}
This work was supported by NSF (2316233), 
NIFA (2020-67021-32799), 
and IBM--Illinois Discovery Accelerator Institute. 
The content of the information in this document does not necessarily reflect the position or the policy of the Government, and no official endorsement should be inferred.  The U.S. Government is authorized to reproduce and distribute reprints for Government purposes notwithstanding any copyright notation here on.
Jun-Gi Jang was supported by Basic Science Research Program through the National Research Foundation of Korea (NRF) funded by the Ministry of Education (RS-2023-00238596).
\end{acks}

\bibliographystyle{ACM-Reference-Format}
\bibliography{80-ref}

\appendix\clearpage\newpage\section{Preliminaries on Basic Tensor Operations}\label{app:prelim}
A \emph{$p$-way tensor} can be viewed as a $p$-dimensional array. Each dimension of a tensor is called a \emph{mode}. To distinguish tensors from matrices and scalars, we use a bold calligraphic font for tensors, a bold italic font for matrices, and non-bold fonts for scalars. The indices of vectors, matrices, and tensors start from $0$. 

The \emph{vectorization} of a tensor $\BS X\in\BB R^{D_1\times\dots\times D_p}$ is a column vector 
where each entry $\SCR X_{i_1,\dots,i_p}$ goes to $(\VEC(\BS X))_j$ at 
\begin{equation}
j=\sum_{n=1}^pi_n\prod_{m=n+1}^{p}D_m.
\end{equation}
The \emph{mode-$n$ matricization} of a tensor $\BS X\in\BB R^{D_1\times\dots\times D_n\times\dots\times D_p}$ is a matrix $\Mat{n}(\BS X)\in\BB R^{D_n\times(D_1\cdots D_{n-1}D_{n+1}\cdots D_p)}$ defined by stacking the mode-$n$ slices of $\BS X$ into a matrix: 
\begin{equation}
\Mat{n}(\BS X):=\MAT{(\VEC(\BS X_{:,\dots,:,0,:,\dots,:}))^\Tp\\\vdots\\(\VEC(\BS X_{:,\dots,:,D_{n}-1,:,\dots,:}))^\Tp}.
\end{equation}
The \emph{mode-$n$ product} of a tensor $\BS G\in\BB R^{r_1\times\dots\times r_n\times\dots\times r_p}$ with a matrix $\BM U\in\BB R^{D\times r_n}$ is a tensor $\BS G\times_n\BM U\in\BB R^{r_1\times\cdots\times r_{n-1}\times D\times r_{n+1}\times\cdots\times r_p}$ defined by
\begin{equation}
\Mat{n}(\BS G\times_n\BM U):=\BM U\cdot\Mat{n}(\BS G),
\end{equation}
where $\cdot$ denotes the matrix multiplication.
Besides that, the \emph{Frobenius norm} $\|\cdot\|_\text F$ of a tensor $\BS X\in\BB R^{D_1\times\dots\times D_p}$ is the square root of the sum of the squares of all its entries:
\begin{equation}
\|\BS X\|_\text F:=\sqrt{\sum_{i_1=0}^{D_1-1}\cdots\sum_{i_p=0}^{D_p-1}\SCR X_{i_1,\dots,i_p}^2}.
\end{equation}
We refer readers to \cite{kolda2009tensor} for further details on tensor operations.\section{Proofs}\label{app:proof}

\subsection{Proof of Lemma~\ref{LEM:hit-height}} \label{PRF:hit-height}

Given a stream segment tree of height $h$ and any range query $\left[ T_s, T_e \right)$, let $\left[ \sigma_v, \tau_v \right)$ be the time range of the root node $v$. Then, when we consider how to find a hit set of range query, there are only three different cases:
\begin{enumerate}
    \item The range query $\left[ T_s, T_e\right)$ is a prefix of the time range $\left[ \sigma_v, \tau_v \right)$, \ie, $T_s = \sigma_v$.
    \item The range query $\left[ T_s, T_e \right)$ is a suffix of the time range $\left[ \sigma_v, \tau_v \right)$, \ie, $T_e = \tau_v$.
    \item The range query $\left[ T_s, T_e \right)$ is neither a prefix nor a suffix of the time range $\left[ \sigma_v, \tau_v \right)$, \ie, $T_s \neq \sigma_v \text{ and } T_e \neq \tau_v$.
\end{enumerate}

For case (1), let $f_1(h)$ be the size of a hit set on a stream segment tree of height $h$. It is evident that $f_1(1) = 1$ since the root node constitutes an entire hit under this circumstance. When $h > 1$, we need to consider whether the range query $\left[ T_s, T_e\right)$ can be entirely contained within the range of the left children. Let $u$ represent the left child of the node $v$. If the query range can be entirely entained, \ie, $T_e \leq \tau_u$, then we should recursively search the hit set within the sub-tree with $u$ as the new root node, implying $f_1(h) \le f_1(h-1)$. Otherwise, we simply put the left child $u$ into the hit set and recursively search within the right sub-tree, which implies $f_1(h) \le f_1(h-1) + 1$. To sum up, we can easily have the following expression
\begin{equation}
    f_1(h) \leq f_1(h-1) + 1
\end{equation}
Since we have the base case that $f_1(1) = 1$, it is easy to prove that
\begin{equation}
    f_1(h) \leq h.
\end{equation}
Similarly, we can also prove that $f_2(h) \leq h$, where $f_2(h)$ represents the size of a hit set for case (2).

Now consider case (3) where $f_3(h)$ is denoted as the size of the hit set. Suppose $u$ is the left child of the node $v$ with the time range of $\left[ \sigma_u, \tau_u\right)$, if the range query spans the time range of both the left child and the right child, \ie, $T_s < \tau_u < T_e$, then we are supposed to search for the hit set in both the left sub-tree and the right sub-tree. Please note that in this situation, the new range query will be either the prefix or the suffix for those sub-trees, \ie, $f_3(h) \le f_1(h-1) + f_2(h-1)$. Conversely, if the range query are limited within the time range of one sub-tree, then the hit set should be searched within the sub-tree, \ie, $f_3(h) \le f_3(h -1)$. To summarize, we have the following expression
\begin{equation}
    f_3(h) \leq f_1(h-1) + f_2(h-1) \leq 2(h-1).
\end{equation}

It follows from the three cases that there exists a hit set of size
\AL{\le\max\{f_1(h),f_2(h),f_3(h)\}=\max\{2(h-1),h\}=\max\{2(h-1),1\}.}

\subsection{Proof of Theorem~\ref{THM:height-T}}

Using the algorithm in Section~\ref{ssec:op-append} to append the tensor slices $\BS X_0,\dots,\BS X_{T-1}$ one by one, we can build a stream segment tree over $[0,T)$. By Theorem~\ref{THM:alg-log-height}, this stream segment tree has height $\lceil\log_2T\rceil+1$.

\subsection{Proof of Proposition~\ref{PROP:space_complexity}}
In a stream segment tree over the range $[0,T)$, there are $O(T)$ nodes each of which has $p-1$ non temporal factor matrices of the size $O(rD)$ and a core tensor of the size $O(r^p)$.
In addition, at each level of the tree, the sum of the sizes of temporal factor matrices is $O(rT)$.
Therefore, the space complexity of the stream segment tree is $O((prD+r^p)T + rT\log T)$.

\subsection{Proof of Theorem~\ref{THM:prune-opt}}
\textbf{Optimal hit sets.} We aim to prove that \ALGref{recall} successfully finds an optimal solution for Eq.~\eqref{eq:prune-obj}. Mathematically, we want to show that,  if there exists a hit set that satisfies the conditions specified in Eq. \eqref{eq:prune-c1} and \eqref{eq:prune-c2}, then the size of this hit set must be greater than or equal to the size of the hit set generated by \ALGref{recall}.

First, due to the top-down nature of \ALGref{recall}, for any node $v$ in the hit set generated by \ALGref{recall}, its parent node is guaranteed to be unable to satisfy Eq.~\eqref{eq:prune-c2}. Therefore, if there exists a hit set $\mathcal{S}^\prime$ that satisfies Eq. \eqref{eq:prune-c1} and Eq.~\eqref{eq:prune-c2}, then there is no node in $\mathcal{S}^\prime$ could be the parent node of any node in the hit set $\mathcal{S}^*$ generated by \ALGref{recall}. Furthermore, for each node $v^\prime$ in the hit set $\mathcal{S}^\prime$, we iteratively search for its parent node along its path to the root node. If there is no node on this path which belongs to $\mathcal{S}^*$, then the node $v^\prime$ has no contribution to covering the query range. The reason is that \ALGref{recall} ensures that the hit set $\mathcal{S}^*$ completely covers the range query. As a result, we can remove the node $v^\prime$ from the hit set $\mathcal{S}^\prime$ to make it smaller. 

So far, we have successfully proven two properties of an optimal hit set $\mathcal{S}_{opt}$: (1) for any node $v^*$ from the hit set $\mathcal{S}^*$ of \ALGref{recall}, no node from $\mathcal{S}_{opt}$ could exist on the path between the node $v^*$ and the root node; (2) for any node $v$ in an optimal hit set $\mathcal{S}_{opt}$, there must exist a node $v^* \in \mathcal{S}^*$ on the path between the root node and the node $v$. Therefore, the size of this optimal hit set is at least as large as the size of $\mathcal{S}^*$, which indicating $\mathcal{S} = \mathcal{S}_{opt}$.

\noindent\textbf{Logarithmic running time.} We want to further prove that the running time of \ALGref{recall} is $O(\log T)$. Since Algorithm 1 performs only $O(1)$ operations at each node, we only need to calculate the number of nodes traversed in the process of finding any range query in the segment tree. The proof here is quite similar with \ref{PRF:hit-height}. Similarly, we reconsider the three cases in \ref{PRF:hit-height}.
\begin{enumerate}
    \item The range query $\left[ T_s, T_e\right)$ is a prefix of the time range $\left[ \sigma_v, \tau_v \right)$, \ie, $T_s = \sigma_v$.
    \item The range query $\left[ T_s, T_e \right)$ is a suffix of the time range $\left[ \sigma_v, \tau_v \right)$, \ie, $T_e = \tau_v$.
    \item The range query $\left[ T_s, T_e \right)$ is neither a prefix nor a suffix of the time range $\left[ \sigma_v, \tau_v \right)$, \ie, $T_s \neq \sigma_v \text{ and } T_e \neq \tau_v$.
\end{enumerate}

Now, let $g_i(h)$ represent the number of traversed nodes in a stream segment tree with a height of $h$ in case ($i$), $v$ represent the root node, and $u_1$ and $u_2$ represent the left children and right children of $v$, respectively. Then, for case (1), if the range query can be entirely contained within the left sub-tree of $u_1$, then the traversed nodes will be the current root node $v$ and all the nodes traversed within the left sub-tree, \ie, $g_1(h)\leq 1 + g_1(h-1)$. Conversely, if the range query spans the time range of both the left sub-tree and right sub-tree, then the set of traversed nodes will be the current node $v$, the left children $u_1$, and all the nodes which will be traversed within the right sub-tree, \ie, $g_1(h) \leq 2+g_1(h-1)$. To sum up, we have the following expression
\begin{equation}
    g_1(h) \leq 2 + g_1(h-1).
\end{equation}
Therefore, considering the base case that $g_1(1) = 1$, we have 
\begin{equation}
    g_1(h) \leq 2 h -1.
\end{equation}
We can also have the same conclusion for case (2), \ie, $g_2(h) \leq 2 h -1$.

As for case (3), if the range query can be contained within the time range of any sub-tree, then similarly we count the current node $v$ and further recursively search within the sub-tree, \ie, $g_3(h) \leq 1 + g_3(h-1)$. Otherwise, please note that in this situation, the new query range will be either a prefix or suffix of the new sub-tree, which implies $g_3(h) \leq g_1(h -1) + g_2(h - 1)$. To sum up, we have 
\begin{equation}
    g_3(h) \leq 4 h -2.
\end{equation}

Therefore, given any stream segment tree with a height of $h$, the number of traversed nodes will be $O(h)$. Since $h = \lceil \log_2 T \rceil + 1$ in Theorem \ref{THM:alg-log-height}, we finally prove that the running time is $O(\log T)$.

\subsection{Proof of Lemma~\ref{LEM:time_complexity_temporal}}
Eq~\eqref{eq:stitch-temp-2} consists of three computations whose costs are as follows:
for $n = 2,...,p$ and $i = 1,...,s$,
(1) matrix multiplications $\BM U^{\Pa n\Tp}\TLD{\BM V}_i^\Pa n$, (2) tensor-matrix products between $\TLD{\BS H}_i$ and the preceding results $\BM U^{\Pa n\Tp}\TLD{\BM V}_i^\Pa n$, and 
(3) tensor-matrix products between the preceding results and matrices $\TLD{\BM V}_i^\Pa 1$ require  $O(spr^2D)$, $O(spr^{p+1})$, and $O(r^{p}(T_e-T_s))$, respectively.
Therefore, the complexity for computing Eq~\eqref{eq:stitch-temp-2} is $O(spr^2 D +r^{p}(T_e-T_s) + spr^{p+1})$.

\subsection{Proof of Lemma~\ref{LEM:time_complexity_nontemporal}}
In Eq~\eqref{eq:stitch-nontemp-3}, there are four computations:
for $m = 2,...,n-1,n+1,...,p$ and $i = 1,...,s$,
(1) matrix multiplications $\BM U_{[t_{i-1},t_i)}^{\Pa1\Tp}\TLD{\BM V}_i^\Pa1$,
(2) matrix multiplications $\BM U^{\Pa m\Tp}\TLD{\BM V}_i^\Pa m$, (3) tensor-matrix products between $\TLD{\BS H}_i$ and the preceding results $\BM U^{\Pa m\Tp}\TLD{\BM V}_i^\Pa m$, and 
(4) tensor-matrix products between the preceding results and matrices $\TLD{\BM V}_i^\Pa n$ require  $O(r^{2}(T_e-T_s))$, $O(spr^2D)$, $O(spr^{p+1})$, and $O(sr^{p}D)$, respectively.
Hence, the complexity for computing Eq~\eqref{eq:stitch-nontemp-3} is $O(spr^2 D + sr^{p} D + r^{2}(T_e-T_s) + spr^{p+1})$.

\subsection{Proof of Proposition~\ref{PROP:time_complexity_RQ}}
For each iteration, there are five computations: (1) the \textsc{Recall} algorithm, (2) the matricization of the temporal mode, (3) the matricization of $p-1$ non-temporal modes, (4) singular value decomposition $p$ times, and (5) the computation for updating core tensor.
Following the Lemmas~\ref{LEM:time_complexity_temporal} and~\ref{LEM:time_complexity_nontemporal}, the first two computations require $O(sp^2r^2 D + spr^{p} D + pr^{2}(T_e-T_s) + r^{p}(T_e-T_s) + sp^2r^{p+1})$.
Since we perform SVD for $p-1$ matrices of the size $D \times r^{p-1}$ and the matrix of the size $(T_e-T_s) \times r^{p-1}$, this requires $O(p\min(r^{p-1}D^2, r^{2(p-1)}D) + \min(r^{p-1}(T_e-T_s)^2,r^{2(p-1)} (T_e-T_s))$.
It can be expressed as $O(pr^{2(p-1)}D + r^{2(p-1)}(T_e-T_s))$ when $D > r^{p-1}$ and $(T_e-T_s) > r^{p-1}$. 
The last computation for a core tensor requires $O(r^{p}D)$.
Therefore, the time complexity is $O((sp^2r^2 + spr^{p} + pr^{2(p-1)})D + (pr^{2} + r^{p} + r^{2(p-1)})(T_e-T_s) + sp^2r^{p+1}+\log T)$ which is the sum of the complexities of these computations.

\subsection{Proof of Theorem~\ref{THM:alg-log-height}}
In this subsection, we will first prove that every insertion of \ALGref{append} is valid, and further prove that the height of the constructed stream segment tree is $\lceil \log_2 T \rceil + 1$ with $T$ representing the time span. 

First, to prove a valid insertion is equivalent to prove that, when a new node is inserted into a segment tree, the segment tree is not full. Please note that the design of creating nodes and insertion in \ALGref{append} ensures that, when a new node at the time of $T$ are inserted into the node $v$, the range query $\left[ \sigma_v, \tau_v \right)$ always contains the time of $T$, \ie, $\sigma \leq T < \tau_v$. Therefore, we only need to prove that for any node $v$ in a stream segment tree, the number of new nodes that can be inserted into the sub-tree rooted at $v$ is $\max \{\tau_v - T, 0 \}$. 

Specifically, we choose to mathematical induction to prove the above statement. When $T = 1$, this statement obviously holds since it means creating the first node for a stream segment tree. Now, let us assume the statement holds true when $T = t_0$. In this situation, if a node of $t_0$ is inserted within a sub-tree rooted at the node $v$, \ALGref{append} ensures that $\tau_v > t_0$, which implies $\max \{\tau_v - t_0, 0 \} > 0$. Then this insertion is valid since there is still space for a new node within the sub-tree of $v$. After this insertion, we can easily calculate that the number of new nodes that can still be inserted will be $\max \{ \tau_v -t_0, 0 \} - 1 = \max \{ \tau_v - (t_0 + 1), 0\}$. Therefore, the statement continues to hold when $T=t_0 + 1$, which completes the proof.

Second, we want to calculate the height of a constructed stream segment tree. From the previous discussion,  we can easily know that given a stream segment tree with a height of $h-1$, nodes can be continuously added until the stream segment tree is full. Once the tree becomes full, the attempt to add another node results in creating a new root node based on \ALGref{append}, with the original root node becoming the node of the left sub-tree. Consequently, the height of the new segment tree becomes $h$. Therefore, it is evident that the number of leaf nodes in a segment tree of height $h$ is greater than that of a perfect binary tree with height $h - 1$ and less than or equal to that of a perfect binary tree with height $h$. Then we successfully show that the height of a stream segment tree is $\lceil \log_2 T \rceil + 1$ since the number of leaf nodes is equal to $T$.

\subsection{Proof of Proposition~\ref{PROP:time_complexity_appending}}

We analyze the worst-case and amortized complexities of a stream segment tree.
Note that $s$ is equal to $2$.
We omit the number of iterations for the stitch operation and Tucker decomposition of the tensor slice for brevity.

First, performing Tucker-ALS of the tensor slice requires $O(prD^{p-1})$.
When we append a tensor slice $\BS X_T$, the worst-case number of the stitch operations is $O(\log T)$.
Then, updating the non-temporal factor matrices requires $O((p^2r^2 + pr^p + pr^{2(p-1)})D\log T + p^2r^{p+1}\log T)$ derived from the first and third terms of the time complexity in Proposition~\ref{PROP:time_complexity_RQ}.
Similarly, updating the temporal factor matrices requires $O((pr^2 + r^p + r^{2(p-1)})T)$ where the sum of the row sizes of the temporal factor matrices is $O(T)$.
Hence, the worst-case complexity of appending a tensor slice $\BS X_T$ is $O(prD^{p-1} + (p^2r^2 + pr^p + pr^{2(p-1)})D\log T + p^2r^{p+1}\log T + (pr^2 + r^p + r^{2(p-1)})T)$.

The amortized complexity is the time complexity of creating a stream segment tree divided by $T$.
We preprocess $T$ tensor slices of leaf nodes with the complexity $O(prD^{p-1}T)$.
For intermediate nodes, we perform $O(T)$ stitch operations with the complexity $O((p^2r^2 + pr^p + pr^{2(p-1)})DT+p^2r^{p+1}T + (pr^2 + r^p + r^{2(p-1)})T\log T)$.
Hence, the amortized complexity is $O(prD^{p-1} + (p^2r^2 + pr^p + pr^{2(p-1)})D+p^2r^{p+1} + (pr^2 + r^p + r^{2(p-1)})\log T)$.

\end{document}